\documentclass{article}

\usepackage{microtype}
\usepackage{graphicx}
\usepackage{subcaption}
\usepackage{booktabs} 

\usepackage{hyperref}

\usepackage{forest}
\newcommand{\sparsity}{\psi}
\newcommand{\leaves}[1]{\mathcal{L}_{#1}}
\usetikzlibrary{arrows.meta}

\newcommand{\simon}[1]{{\bf \color{teal} Simon: #1}}
\newcommand{\josia}[1]{{\bf \color{olive} Josia: #1}}
\newcommand{\mprobst}[1]{{\bf \color{red} Max: #1}}

\renewcommand{\simon}[1]{}
\renewcommand{\josia}[1]{}
\renewcommand{\mprobst}[1]{}

\usepackage[accepted]{icml2026}

\usepackage{amsmath}
\usepackage{amssymb}
\usepackage{mathtools}
\usepackage{amsthm}

\usepackage{thmtools}
\usepackage[capitalize,noabbrev]{cleveref}

\theoremstyle{plain}
\newtheorem{theorem}{Theorem}[section]

\newtheorem{lemma}[theorem]{Lemma}
\newtheorem{corollary}[theorem]{Corollary}
\theoremstyle{definition}
\newtheorem{definition}[theorem]{Definition}

\theoremstyle{remark}

\usepackage[textsize=tiny]{todonotes}

\icmltitlerunning{An Approximation Algorithm for Graph Label Selection}

\begin{document}

\twocolumn[
  \icmltitle{An Approximation Algorithm for Graph Label Selection}



  \icmlsetsymbol{equal}{*}

  \begin{icmlauthorlist}
    \icmlauthor{Josia John}{yyy}
    \icmlauthor{Simon Meierhans}{yyy}
    \icmlauthor{Maximilian Probst Gutenberg}{yyy}
  \end{icmlauthorlist}

  \icmlaffiliation{yyy}{Department of Computer Science, ETH Zurich, Zurich, Switzerland}

  \icmlcorrespondingauthor{Josia John}{jojohn@ethz.ch}
  \icmlcorrespondingauthor{Simon Meierhans}{simon.meierhans@inf.ethz.ch}
  \icmlcorrespondingauthor{Maximilian Probst Gutenberg}{maximilian.probst@inf.ethz.ch}

  \icmlkeywords{Machine Learning, ICML}

  \vskip 0.3in
]



\printAffiliationsAndNotice{}  

\begin{abstract}
In the graph label selection problem, one is given an 
$n$-vertex graph and a budget 
$k$, and seeks to select 
$k$ vertices whose labels enable accurate prediction of the labels on the remaining vertices. 
This problem formalizes distilling a small representative set from the whole graph.
\\
We present the first 
$\tilde{O}(\log^{1.5} n)$-approximation algorithm for graph label selection under the standard budget constraint. Prior work either relies on resource augmentation, allowing substantially more than 
$k$ labeled vertices, or consists primarily of heuristics without provable guarantees. 
\\
Finally, we demonstrate that practical heuristic variants of our algorithm scale to significantly larger graphs than previous methods, while essentially retaining their quality.


\end{abstract}

\section{Introduction}
Selecting a representative small subset of a large data set is a crucial task in various areas of machine learning, such as speeding up training \cite{distillation, pmlr-v235-yang24am} and selecting diverse additional context for passing to a large language model (LLM) for improved inference \cite{prompt_compression, xRag, bateni2025algorithmicthinkingtheory}. 

In active learning, the training set, i.e. the \emph{labeled set}, is not fixed but rather chosen by the learning algorithm itself to be as representative of the entire dataset as possible. For pairwise similarities between data points represented as a graph $G = (V, E, w)$, \citet{Guillory-Blimes09} introduced a natural objective for selecting such a labeled set. They fix $k$ to be the maximum number of data points to be labeled, and aim to rule out that a large cluster of unlabeled points with few connections to the rest of the data set exists. Formally, they focus on selecting a set of nodes $L$, under the cardinality constraint $|L|\le k$, which maximize the objective 
$$\Psi(L) \coloneqq \min_{C \subseteq V\setminus L} \frac{w(C, V \setminus C)}{|C|}.$$
This yields the Graph Label Selection Problem (GLS), which requires finding 
\begin{equation*}
    \text{OPT}_k \coloneqq \max_{L\subset V: |L| \le k} \Psi(L).
\end{equation*}

Despite its early and natural formulation, algorithms with theoretical guarantees were lacking for a long time. \citet{Guillory-Blimes09} give some practical heuristics to maximize $\Psi(L)$. \citet{Cesa-Bianchi10} presented an algorithm with theoretical guarantees on unweighted trees. 

Recently, \citet{Cohen-Addad25} showed that the problem is NP-hard. Therefore, it is natural to consider approximation algorithms for this problem. There are two natural notions of approximation for this problem. The weaker notion, an $\alpha$-resource augment algorithm, uses budget $\alpha \cdot k$ but is competitive with $\text{OPT}_{k}$, i.e., the optimal value obtained with much smaller budget $k$. An $O(\log n)$-resource algorithm was recently given in \cite{Cohen-Addad25}. 

In \cite{Cohen-Addad25}, it is stated as an interesting open problem to achieve a stronger and more natural version of approximation: they ask whether an algorithm exists that uses the given $k$ and is $\beta$-competitive in $\text{OPT}_k$, i.e. the solution has quality at least $\text{OPT}_k/\beta$. In this article we resolve this question by presenting an efficient $\tilde{O}(\log^{1.5} n)$ approximation algorithm for maximizing $\Psi(L)$, which does not rely on resource augmentation. Additionally, a relaxed version of our new algorithm has better runtime and can thus be scaled to larger instances.  



\begin{theorem}\label{thm:main}
    Given a graph $G = (V, E, w)$ with weights in $[1, \text{poly}(n)]$, and a budget $k\in \mathbb{N}$, there exists a polynomial-time algorithm that returns a set $L'$ with
    \begin{itemize}
        \item $|L'|\le k$ and 
        \item $\beta \cdot \Psi(L') \ge \text{OPT}_k$
    \end{itemize}
    where $\beta \in \tilde{O}(\log^{1.5} n)$.
\end{theorem}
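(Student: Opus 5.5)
We first guess the optimal value: by binary search over the polynomially many candidate thresholds, we fix $\psi$ with $\psi \le \text{OPT}_k$ and aim to output $L'$ with $|L'|\le k$ and $\Psi(L')\ge \psi/\beta$. The key reformulation is that $\Psi(L)\ge\psi$ holds iff every set $C\subseteq V\setminus L$ satisfies $w(C,V\setminus C)\ge \psi|C|$. Equivalently, in the auxiliary graph obtained by adding a super-source $s$ joined to every vertex with capacity $\psi$ and making every vertex of $L$ a sink, a max flow saturates all source edges; in other words, every vertex can route $\psi$ units of demand to $L$. Thus $\Psi(L)\ge\psi$ is a \emph{fractional facility-location / routing} condition: each vertex sends $\psi$ units of flow to the labeled set through $G$ with capacities $w$. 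Call this the flow certificate.

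Next, we use a hierarchical decomposition of $G$ to turn this congestion-constrained routing problem into one on a tree. Concretely, we take a Räcke-type oblivious routing tree, or better a cut-matching/expander hierarchy, giving congestion approximation $\alpha=O(\log n)$ or $\tilde O(\log^{1.5} n)$ depending on the construction. We expect the $\log^{1.5}$ to come from using the $O(\sqrt{\log n})$ expander-flow (ARV) guarantee inside each cluster of an $O(\log n)$-level hierarchy, up to $\log\log$ factors. On a tree $T$ with capacities, the problem "choose $k$ leaves so that every subtree $S$ not containing a label has $\text{cap}(\partial S)\ge \psi'|S|$" admits an exact or $O(1)$-approximate solution by dynamic programming over subtrees. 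The DP tracks the number of labels used in a subtree together with the unmet demand flowing out of its root edge, and discretizing the demand to powers of $(1+\epsilon)$ keeps it polynomial. The cheapest certificate in $G$ maps to a tree solution of value $\psi/\alpha$ using the same $k$ labels. Conversely, a tree solution with value $\psi'$ maps back to $G$ with loss at most $\alpha$, because each tree cut is dominated by a graph cut up to the congestion factor. Note that the labels must be leaves, i.e.\ vertices of $G$, so the DP selects leaves only.

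The main obstacle is that the tree solution only certifies routing for demands aggregated over clusters, while $\Psi$ quantifies over \emph{all} sets $C$, including sets straddling clusters and small sets inside well-connected clusters. Our first attempt at this step is as follows. Inside each cluster of the hierarchy, which is a $\phi$-expander relative to its boundary, we argue that if the cluster's total demand can exit it (or reach a label inside it), then any subset $C$ of the cluster also has boundary at least $\phi\cdot\psi|C|/\text{polylog}$. This requires choosing the cluster weighting proportional to vertex count, i.e.\ a vertex-weighted expander hierarchy with volume $|C|$. We then combine the levels via a charging argument: every $C$ is split among leaf clusters, and the boundary loss across $O(\log n)$ levels is absorbed in $\beta$. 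Strict budget feasibility comes for free because the DP enforces exactly $k$ leaves, which is what avoids resource augmentation. If the direct charging loses too much, we fall back to rounding an LP over the tree, whose integrality gap on trees is $O(1)$ by total-unimodularity-like structure of nested cut constraints.
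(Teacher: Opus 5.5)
The outline matches the paper's: a threshold search, the flow certificate with source edges $\psi$ and sinks at $L$, a single tree from a cut-approximating hierarchy, and a DP over the tree indexed by the number of labels and the flow through the root edge. However, the step you call the main obstacle is left open, and your proposed patch does not close it. The expander-charging argument is not a proof. Nothing ties the expansion of a cluster to the threshold $\psi$, and the ``vertex-weighted expander hierarchy'' comes with no stated guarantee. A polylogarithmic loss per level, compounded over $O(\log n)$ levels, is not ``absorbed in $\beta$''. The LP fallback is unspecified: the feasibility condition is a max--min in the label indicators, and you exhibit no totally unimodular structure. Your accounting also loses $\alpha$ twice, once mapping the optimum into the tree and once mapping back. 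That gives $\alpha^2=\tilde O(\log^3 n)$, not $\tilde O(\log^{1.5} n)$.

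The missing idea is that the obstacle disappears once the tree problem is stated correctly. Use a single $\alpha$ tree cut sparsifier (R\"acke--Shah, $\alpha=\tilde O(\log^{1.5} n)$) whose leaf set is exactly $V$ and which satisfies $w_G(A,V\setminus A)\le\lambda_T(A,\mathcal{L}_T\setminus A)\le\alpha\,w_G(A,V\setminus A)$ for every $A\subseteq V$. No single tree with $O(\log n)$ is known, and R\"acke's convex combination of trees does not directly work for this max--min objective. Define the tree objective over arbitrary leaf subsets with the tree min-cut, $\widehat\Psi_T(L)=\min_{S\subseteq\mathcal{L}_T\setminus L}\lambda_T(S,\mathcal{L}_T\setminus S)/|S|$, not only over clusters or rooted subtrees. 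Since $\Psi_G(L)$ and $\widehat\Psi_T(L)$ minimize over the same family of sets $C\subseteq V\setminus L$, the sparsifier inequality gives $\Psi_G(L)\le\widehat\Psi_T(L)\le\alpha\,\Psi_G(L)$ termwise for every $L$. Sets straddling clusters and small sets inside clusters are then covered automatically, and no expansion is used. With $L^\ast$ optimal in $G$ and $L'$ optimal on $T$, this yields $\alpha\Psi_G(L')\ge\widehat\Psi_T(L')\ge\widehat\Psi_T(L^\ast)\ge\Psi_G(L^\ast)$, so $\alpha$ is lost only once. Finally, feasibility of the flow on $T$ (leaf sources $\tau$, infinite sinks at $L$) is equivalent to $\widehat\Psi_T(L)\ge\tau$. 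After binarizing with infinite-weight edges, a DP solves this exactly with $O(|V_T|\,k)$ real-valued states. For each $v$ and budget $k$ it stores the maximum flow injectable at $v$ that $T_v$ can absorb with at most $k$ leaf sinks (negative if $T_v$ must send flow out). So neither discretization nor LP rounding is needed.
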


To achieve our results, we go beyond previous greedy approaches that select labeled points one at a time. Instead, our algorithm captures global interactions between the labeled points, which is necessary for obtaining a guarantee on the approximation with a fixed budget.\footnote{Star graphs are a simple example that highlight this behavior. If the budget is $k = n - 1$, it is imperative not to select the center of the star which is the obvious greedy choice.}

Additionally, our algorithm is optimal on weighted trees (See \Cref{appendix:weightedTreesOptimal}) and seamlessly extends to vertex importances (See \Cref{appendix:verteximportance}).

We complement this theoretical result with a proof-of-concept experiment. We give an implementation of the algorithm where we replace the key primitive of finding sparse cuts with simple heuristics. Even with these simple heuristics, our algorithm essentially matches the quality of previous algorithms \cite{Cohen-Addad25, Guillory-Blimes09, Cesa-Bianchi10} while being significantly more scalable.


\paragraph{Overview.} Previous algorithms build the set $L$ one by one. But the objective $\Psi(L)$ is neither submodular nor supermodular \cite{Cohen-Addad25}. This means it is unclear how to choose the next vertex without resorting to heuristics. We take a new approach and select the set $L$ using Dynamic Programming.

To be able to apply Dynamic Programming, we reduce the graph label selection problem twice. First, we reduce it to solving a problem on a weighted binary tree using a \textit{tree cut sparsifier}. A tree cut sparsifier is a tree $T$ spanning the same vertex set as the input graph such that \textbf{every} cut $(S, V \setminus S)$ has approximately the same weight in $G$ and in $T$. In \cite{RS14}, a polynomial-time algorithm is given that constructs a tree cut sparsifier with approximation factor $\beta = \tilde{O}(\log^{1.5} n)$. We give the reduction in \cref{section:reduction_binarytree}. In \cref{section:reduction_flow}, we use a similar reduction as in \cite{Cohen-Addad25} to reduce to a flow problem. However, while \citet{Cohen-Addad25} need to solve this flow problem on a general graph, for us, it suffices to solve it on a binary tree. This allows us, in the final step in  \cref{section:DP}, to solve the flow problem via Dynamic Programming (bypassing the need to solve the maximum flow problem).

\paragraph{Related Work. }

\begin{itemize}
\item \underline{Active learning:} We restrict our discussion of related work to graph-based approaches, and refer the reader to \cite{settles2009active, 10.1145/3472291} for surveys on classical and deep active learning, respectively.

The problem of selecting $L$ such that it maximizes our knowledge about the other vertices in the graph has been studied thoroughly \cite{10.5555/3041838.3041953, Guillory-Blimes09, Cesa-Bianchi10, pmlr-v40-Dasarathy15} from both combinatorial and deep learning \cite{Aodha_2014_CVPR, kushnir2020diffusion, NEURIPS2020_73740ea8, zhang2022information, pmlr-v162-zhang22k} based angles.\simon{Currently pretty similar to \cite{Cohen-Addad25} - maybe add a citation or two if we can find. Diversity sampling could be good. }
\item \underline{Tree Cut Sparsifiers:} The first polynomial-time algorithms for tree cut sparsifiers with polylogarithmic approximation was given by \citet{harrelson2003polynomial}. The above result by \citet{RS14} improves the approximation factor significantly. A lower bound of $\Omega(\log n)$ in the approximation, even existentially, is sketched in \cite{RS14}. More recently, near-linear time algorithm achieving an approximation of $\tilde{O}(\log^2 n)$ were given \cite{RST14, agassy2025improvedtreesparsifiersnearlinear, henzinger2025improved}. 
\item \underline{Diversity Sampling:} Our algorithm can be seen as a form of diversity sampling. We refer the reader to \cite{anand2025graphbasedalgorithmsdiversesimilarity} for a recent graph based algorithm in this space. 
\end{itemize}

\section{Preliminaries}

\paragraph{Misc. } In this article, $\tilde{O}(f(n))$-notation suppresses logarithmic factors in $f(n)$, i.e. is in $O(f(n) \log^c(f(n)))$ for some constant $c > 0$.

\paragraph{Graphs.} We consider undirected, weighted graphs $G = (V, E, w)$ where $V$ denotes the vertex set and $E\subseteq V\times V$ denotes the edge set. The function $w : E \mapsto \mathbb{R}$ contains the non-negative edge weights. We let $n$ denote the number of vertices $|V|$.

For some set $A \subseteq V$, we denote by $G[A]$ the subgraph induced by $A$, i.e. the graph obtained when restricting to the set of vertices in $A$.

\paragraph{Trees.} We use $\leaves{T}$ to denote the leaves of a tree $T$. If the tree is rooted, we denote the subtree of vertex $v$ by $T_v$.

\paragraph{Cuts.} We define the weight of a cut $A\subseteq V$ in some graph $G=(V,E,w)$: $$w_G(A, V\setminus A) \coloneqq \sum_{\substack{(u,v)\in E\\u\in A, v \in V\setminus A}} w(u,v).$$\\
We also define the minimum cut separating two disjoint sets $A,B \subseteq V, A\cap B = \emptyset$: $$\lambda_G(A, B)\coloneqq \min_{A\subseteq S \subseteq V\setminus B} w(S, V\setminus S).$$

\paragraph{Graph Label Selection (GLS)}
For convenience, we restate the definition of the Graph Label Selection Problem here. We define the optimal solution to GLS on a graph $G=(V,E,w)$ as
$$\text{OPT}_k \coloneqq \max_{L\subset V: |L| \le k} \Psi(L),$$
where
$$\Psi(L) \coloneqq \min_{C \subseteq V\setminus L} \frac{w(C, V \setminus C)}{|C|}.$$




\section{Reducing to Binary Tree}
\label{section:reduction_binarytree}


Our algorithm reduces the graph label selection problem from a general graph to a related problem on a binary tree. We do so by first constructing a tree cut sparsifier for the given graph.

\begin{definition}[Tree Cut Sparsifier]
    \label{definition:treecutsparsifier}
    A tree $T = (V_T, E_T, w_T)$ is an $\alpha$ tree cut sparsifier for
    the graph $G = (V, E, w)$ if the leaves $\leaves{T} = V$, and for all $A \subseteq V$: 
    \begin{align*}\label{eq:cutSparsifier}w(A, V\setminus A) \le \lambda_T(A, \leaves{T}\setminus A) \le \alpha \cdot w(A, V\setminus A).\end{align*}
\end{definition}

We now translate the graph label selection problem to the tree cut sparsifier. We call this new problem the leaf label selection problem. For this we need the notion of sparsity on the leaf set.

\begin{definition}[Sparsity on Subset]
    Given a graph $G = (V, E, w)$, we define the sparsity of $A$ in a superset $C \supset A$:
    $$\sparsity_{G}^{ C}(A) \coloneqq \frac{\lambda_G(A,C\setminus A)}{|A|}.$$
\end{definition}

\begin{definition}[Leaf Label Selection Problem (LLS)]
    Given a tree $T = (V, E, w)$ and a budget $k \in \mathbb{N}$, find a set $L \subseteq \leaves{T}, |L| \le k$ that maximizes the objective
    $$\widehat{\Psi}_T(L) \coloneqq \min_{S \subseteq \leaves{T}\setminus L}\sparsity_{T}^{ \leaves{T}}(S).$$
\end{definition}

Note that this is different from the graph label selection problem in that we must choose $L$ as a subset of the leaf set $\leaves{T}$.

Now we reduce the GLS problem on general graphs to an LLS problem on trees with loss of factor $\alpha = \tilde{O}(\log^{1.5} n)$ in the approximation.

\begin{lemma}
    \label{lemma:treecutsparsifierEq1}\label{lemma:treecutsparsifierEq2}
    Given a graph $G=(V_G,E_G,w_G)$ and a corresponding $\alpha$ tree cut sparsifier $T=(V_T, E_T, w_T)$. For any solution $L\subseteq \leaves{T}=V$, we have:
    $$\Psi_G(L) \le \widehat{\Psi}_T(L) \le \alpha \cdot \Psi_G(L).$$
\end{lemma}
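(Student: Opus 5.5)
The plan is to compare the two objectives term by term, using that both $\Psi_G(L)$ and $\widehat{\Psi}_T(L)$ are minima over the same family of candidate sets. Since $\leaves{T}=V$, the sets $S\subseteq \leaves{T}\setminus L$ are exactly the sets $C\subseteq V\setminus L$. For each such set $C$ I compare the numerators $w_G(C,V\setminus C)$ and $\lambda_T(C,\leaves{T}\setminus C)$. The denominators agree, both being $|C|$.

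First, the definition of an $\alpha$ tree cut sparsifier (\Cref{definition:treecutsparsifier}) applied to $A=C$ gives
\[ w_G(C,V\setminus C)\le \lambda_T(C,\leaves{T}\setminus C)\le \alpha\cdot w_G(C,V\setminus C). \]
Dividing by $|C|$ turns this into
\[ \frac{w_G(C,V\setminus C)}{|C|}\le \sparsity_T^{\leaves{T}}(C)\le \alpha\cdot\frac{w_G(C,V\setminus C)}{|C|} \]
for every nonempty $C\subseteq V\setminus L$.

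Second, I take minima over $C$ on all three sides. A pointwise inequality $f(C)\le g(C)$ over a common index set implies $\min f\le \min g$. This yields $\Psi_G(L)\le \widehat{\Psi}_T(L)\le \alpha\,\Psi_G(L)$.

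The only real care needed is bookkeeping. The empty set must be excluded from both minima; the paper's definitions implicitly range over nonempty sets, since $|C|$ appears in the denominator. If $L=V$, both minima are over an empty family, both sides equal $+\infty$, and the statement is trivial.

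I do not expect a genuine obstacle. The one point worth checking is that $\lambda_T(C,\leaves{T}\setminus C)$ is a minimum over cuts in $T$ that may place internal (non-leaf) tree vertices on either side. This is exactly the quantity the sparsifier definition controls, so no extra argument about internal vertices is required.
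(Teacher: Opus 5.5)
Your proposal is correct and matches the paper's proof. Both arguments use $\mathcal{L}_T = V$ to minimize over the same family of sets, apply the sparsifier inequality to each candidate set $C \subseteq V\setminus L$, divide by $|C|$, and take minima on each side. The paper writes out only the upper bound $\widehat{\Psi}_T(L) \le \alpha\,\Psi_G(L)$ and states that the lower bound follows analogously.
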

\begin{proof}
For the upper bound, we derive
    \begin{align*}
        \widehat{\Psi}_T(L) & = \min_{S\subseteq \leaves{T}\setminus L} \sparsity_{T}^{ \leaves{T}}(S)\\
        & = \min_{S\subseteq \leaves{T}\setminus L} \frac{\lambda_T(S, \leaves{T}\setminus S)}{|S|}\\
        & \overset{\text{\tiny{\ref{definition:treecutsparsifier}}}}{\le} \min_{S\subseteq V\setminus L} \frac{\alpha \cdot w_G(S, V\setminus S)}{|S|}\\
        & = \alpha \cdot \min_{S\subseteq V\setminus L} \sparsity_G(S)
         = \alpha \cdot \Psi_G(L).  
    \end{align*}
The lower bound follows analogously. 
\end{proof}

\begin{corollary}
    \label{corollary:treeEquiv}
    Given a graph $G=(V,E,w)$ and a corresponding $\alpha$ tree cut sparsifier $T=(V_T, E_T, w_T)$. Let $L'$ be an optimal solution to LLS on $T$. Then $L'$ is a solution for GLS with approximation factor $\alpha$.
\end{corollary}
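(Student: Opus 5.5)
The plan is to chain the two inequalities of \Cref{lemma:treecutsparsifierEq1} with the optimality of $L'$ for LLS. Let $L^*\subseteq V$ with $|L^*|\le k$ be an optimal GLS solution, so $\Psi_G(L^*)=\text{OPT}_k$. Since $\leaves{T}=V$ by \Cref{definition:treecutsparsifier}, every feasible GLS solution is also a feasible LLS solution, and conversely. In particular $L^*\subseteq\leaves{T}$ is feasible for LLS, and $L'\subseteq V$ with $|L'|\le k$ is feasible for GLS. This correspondence of feasible sets is the only structural point to check.

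First, the lower bound in \Cref{lemma:treecutsparsifierEq1} applied to $L^*$ gives $\text{OPT}_k=\Psi_G(L^*)\le\widehat{\Psi}_T(L^*)$. Second, optimality of $L'$ for LLS gives $\widehat{\Psi}_T(L^*)\le\widehat{\Psi}_T(L')$. Third, the upper bound in \Cref{lemma:treecutsparsifierEq1} applied to $L'$ gives $\widehat{\Psi}_T(L')\le\alpha\cdot\Psi_G(L')$. Chaining these yields $\alpha\cdot\Psi_G(L')\ge\text{OPT}_k$ with $|L'|\le k$, which is exactly an $\alpha$-approximation for GLS.

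The only subtlety, and the one step I would state explicitly, is that the lemma's lower bound is invoked at $L^*$, which need not be related to $L'$; this is legitimate because the lemma holds for every $L\subseteq\leaves{T}$.
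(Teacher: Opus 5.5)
Your proposal is correct and follows the paper's proof: you chain the lower bound of \Cref{lemma:treecutsparsifierEq1} at $L^\ast$, the LLS-optimality of $L'$, and the upper bound at $L'$ to obtain $\alpha\,\Psi_G(L')\ge\Psi_G(L^\ast)=\text{OPT}_k$, exactly as the paper does. You also state explicitly that the feasible sets of GLS and LLS coincide because $\leaves{T}=V$, which the paper leaves implicit.
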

\begin{proof}
    Let $L^\ast$ denote an optimal solution of GLS on $G$. Using \cref{lemma:treecutsparsifierEq1,lemma:treecutsparsifierEq2}, and optimality of $L'$ for LLS, we obtain
    \begin{align*}
        \alpha\Psi_G(L') &\ge \widehat{\Psi}_T(L') & \text{(\ref{lemma:treecutsparsifierEq1})}\\
        \widehat{\Psi}_T(L') &\ge \widehat{\Psi}_T(L^\ast) \ge \Psi_G(L^\ast) & \text{(optimality, \ref{lemma:treecutsparsifierEq1})}\\[7pt]
        \implies \alpha\Psi_G(L') &\ge \Psi_G(L^\ast).
    \end{align*}
\end{proof}

We have reduced GLS to LLS on trees. Now, we show that solving LLS on a binary tree is sufficient by constructing a binary tree cut sparsifier via standard reductions \cite{RS14}.

\begin{lemma}
    \label{lemma:binarysparsifier}
    Given an $\alpha$ tree cut sparsifier $T$ of some graph $G$, we can compute a binary tree $T'$ which is an $\alpha$ tree cut sparsifier of $G$ too. This can be done in linear time.
\end{lemma}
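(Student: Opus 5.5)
The plan is to make $T$ binary by splitting every high-degree node with infinite-weight gadget edges, and then to check that every quantity $\lambda_{T'}(A,\leaves{T'}\setminus A)$ equals the corresponding $\lambda_T(A,\leaves{T}\setminus A)$. The sparsifier inequalities of \cref{definition:treecutsparsifier} then carry over unchanged with the same $\alpha$.

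\textbf{Normalization.} First root $T$ at an arbitrary internal node and process the tree locally. Two cases need attention before splitting:
\begin{itemize}
\item If an internal node $u$ has $d>2$ children $c_1,\dots,c_d$, we will replace the star around $u$ by a caterpillar.
\item If an internal node has exactly one child, we simply contract the parent edge into it, or keep it, since it is harmless for binarity.
\end{itemize}
Since the leaves of $T$ must remain exactly $V$, no new leaves may be created. Every new vertex will therefore be internal and get degree three.

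\textbf{Splitting a node.} Concretely, for a node $u$ with children $c_1,\dots,c_d$, introduce new internal nodes $u=u_1,u_2,\dots,u_{d-1}$. Add edges $(u_i,u_{i+1})$ of weight $W$, where $W$ is larger than the total weight of $T$ (effectively infinity). Attach $c_i$ to $u_i$ for $i<d-1$, and attach $c_{d-1},c_d$ to $u_{d-1}$, keeping the original weights $w_T(u,c_i)$. Each node is touched a constant number of times per child, so the total work is linear in $|V_T|$.

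\textbf{Key step: min cuts are preserved.} We must show that for every $A\subseteq V$,
\[
\lambda_{T'}(A,\leaves{T'}\setminus A)=\lambda_T(A,\leaves{T}\setminus A).
\]
\begin{itemize}
\item \emph{$\le$.} Given an optimal cut $S$ in $T$, put all copies $u_i$ on the side of $u$. No heavy edge is cut, and every original edge is cut exactly when it was cut before, so the value is unchanged.
\item \emph{$\ge$.} Given a minimum cut $S'$ in $T'$, we may assume it cuts no heavy edge: any cut cutting a heavy edge costs at least $W$, which exceeds the value of the cut just constructed. So every group $\{u_1,\dots,u_{d-1}\}$ lies entirely on one side. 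Contracting each group back to $u$ then yields a cut in $T$ with the same value and the same leaf separation.
\end{itemize}

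\textbf{Main obstacle.} The delicate point is that the statement wants genuine finite weights, or at least $W$ bounded polynomially for the later DP. Choosing $W=\sum_e w_T(e)+1$ suffices, because every minimum leaf cut has value at most $\sum_e w_T(e)$. This keeps the weights polynomial whenever those of $T$ are. I also need to double-check that the choice of root does not matter; it does not, since the construction only relabels the edges incident to $u$.
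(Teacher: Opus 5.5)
Your proof is correct and is essentially the paper's argument. Both split high-degree nodes with auxiliary vertices joined by infinite-weight edges and show that leaf min cuts are preserved in both directions; your caterpillar is the paper's iterated single split done at once, and your finite $W=\sum_e w_T(e)+1$ is a harmless refinement of $\infty$. The only slip is the optional ``contract'' step for a node $u$ with parent $p$ and single child $c$: it can increase cut values and break the upper bound unless the merged edge gets weight $\min\{w(p,u),w(u,c)\}$, whereas simply keeping such nodes, as you also allow, is fine.
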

\begin{proof}
\vspace{-2em}
    \begin{figure}[ht]
        \centering
        \forestset{
            tria/.style={
            node format={
              \noexpand\node [
              draw,
              shape=regular polygon,
              regular polygon sides=3,
              inner sep=0pt,
              outer sep=0pt,
              \forestoption{node options},
              anchor=\forestoption{anchor}
              ]
              (\forestoption{name}) {\foresteoption{content format}};
            },
            child anchor=parent,
            },
        }
        \begin{forest}
        [
            [0,circle,draw,edge label={node[midway,right]{$r$}}
                [1,tria,edge label={node[midway,below right]{$a$}}]
                [\ldots,no edge]
                [2,tria,edge label={node[midway,right]{$b$}}]
                [3,tria,edge label={node[midway,right]{$c$}}]
                [4,tria,edge label={node[midway,above right]{$d$}}]
            ]]
        \end{forest}
        \begin{forest}
        [
            [0,circle,draw,edge label={node[midway,right]{$r$}}
                [1,tria,edge label={node[midway,right]{$a$}}]
                [\ldots,no edge]
                [2,tria,edge label={node[midway,left]{$b$}}]
                [X,circle,draw,edge label={node[midway,above right]{$\infty$}}
                    [3,tria,edge label={node[midway,right]{$c$}}]
                    [4,tria,edge label={node[midway,right]{$d$}}]
                ]
            ]]
        \end{forest}
        \caption{Decreasing degree by 1 by introducing a new vertex $X$ with $\infty$ edge weight.}
        \label{figure:tobinary}
    \end{figure}
    Given an internal vertex $v$ with degree larger than $3$, we can add an auxiliary vertex $x$ as a child of $v$ with infinite edge weight and attaching two of $v$'s children to $x$ instead (see \Cref{figure:tobinary}). This introduces no new vertices with degree larger than $3$ and reduces the degree of $v$ by one, therefore iteratively applying this procedure results in a binary tree.
    Finally, we show that the resulting tree remains a tree cut sparsifier. We do so by showing it for a single step and applying induction. Firstly, any cut in the original construction can be reproduced in the new construction by cutting the same edges. 
    Secondly, any cut with finite cut size in the new construction does not use the $\infty$ edge. 
    Therefore, the requirement
    $$w(A, V\setminus A) \le \lambda_T(A, \leaves{T}\setminus A) \le \alpha \cdot w(A, V\setminus A)$$
    still holds.
\end{proof}

Combining the above we obtain the following lemma.

\begin{lemma}[Reduction of GLS to LLS]
Given an $\alpha$ tree cut sparsifier over $O(n)$-vertices of $G$, then GLS on G with an approximation factor of $\alpha$ can be reduced to solving LLS on an $O(n)$-vertices binary tree instance after processing in $O(n)$ time.
\end{lemma}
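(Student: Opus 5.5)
The plan is to compose the two preceding results. We are given an $\alpha$ tree cut sparsifier $T$ of $G$ with $O(n)$ vertices. First we apply \Cref{lemma:binarysparsifier} to turn $T$ into a binary tree $T'$ that is still an $\alpha$ tree cut sparsifier of $G$, with $\leaves{T'} = V$. Then we apply \Cref{corollary:treeEquiv} to $T'$: any optimal solution $L'$ of LLS on $T'$ with budget $k$ satisfies $|L'|\le k$, and $\alpha \Psi_G(L') \ge \text{OPT}_k$. So solving LLS on $T'$ yields an $\alpha$-approximation for GLS on $G$. The output $L'\subseteq \leaves{T'} = V$ is directly a feasible GLS solution, so no translation back is needed.

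The main thing to check is the size and running time, which is the only nontrivial point. Each step of the construction in \Cref{lemma:binarysparsifier} adds exactly one auxiliary vertex and lowers the degree of some vertex of degree larger than $3$ by one. The number of steps is therefore at most $\sum_{v}\max(0,\deg(v)-3) \le 2|E_T| = O(n)$, since $T$ is a tree on $O(n)$ vertices. Hence $T'$ has $O(n)$ vertices. To get $O(n)$ total time, we perform the splitting locally: for a vertex $v$ with children $c_1,\dots,c_d$, we build a caterpillar of $d-2$ auxiliary vertices joined by $\infty$-weight edges in a single pass over $v$'s adjacency list. This costs $O(d)$ per vertex and $O(n)$ overall, matching the linear-time claim of \Cref{lemma:binarysparsifier}.

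Two minor details need attention. The first is to root $T$ at an internal vertex, or at a new auxiliary vertex, so that "children" is well defined. The second is that the $\infty$ weights can be replaced by a large finite value, such as the total weight plus one, so the instance remains a standard weighted tree. Neither choice changes any finite min-cut $\lambda_{T'}(A,\leaves{T'}\setminus A)$, by the argument already given in \Cref{lemma:binarysparsifier}.
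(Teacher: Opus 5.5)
Your proposal is correct and follows the paper's route: the paper's proof simply composes \Cref{lemma:binarysparsifier} with \Cref{corollary:treeEquiv}. Your extra bookkeeping (at most $\sum_v \max(0,\deg(v)-3) = O(n)$ auxiliary vertices, the caterpillar construction for linear time, the rooting, and finite replacements for the $\infty$ weights) spells out size and running-time details the paper leaves implicit, and it is sound.
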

\begin{proof}
    This follows immediately from \cref{corollary:treeEquiv,lemma:binarysparsifier}.
\end{proof}

In the rest of this article, we therefore focus on solving LLS on binary trees.

\section{Reducing to Flow Problem}
\label{section:reduction_flow}

In \cite{Cohen-Addad25}, they construct a flow gadget inspired by the densest subgraph problem to obtain a resource augmented algorithm for graph label selection. We observe that a similar flow problem can be solved exactly on the binary tree $T$ via dynamic programming. For this purpose, we construct a graph $T_{L,\tau}$ such that for any parameter $\tau\ge 0$, $\widehat{\Psi}_T(L) \ge \tau$ if and only if the $s$-$t$-maxflow problem on $T_{L,\tau}$ has value $n \cdot \tau$ where $n=|\leaves{T}|$ denotes the number of leaves in $T$.

\begin{definition}[Flow Graph]
    \label{definition:flowgraph}
    Given a tree $T=(V,E,w)$, a set $L \subseteq \leaves{T}$, and a threshold $\tau\in \mathbb{R}$. We construct $T_{L,\tau}$:
    \begin{itemize}
        \item Vertex Set: $V\cup\{s,t\}$
        \item Graph Copy: Every edge $e\in E$ is also present in $T_{L, \tau}$ with the same weight.
        \item Source edges: For every leaf $v\in \leaves{T}$, there is an edge $(s, v)$ with weight $\tau$ in $T_{L, \tau}$.
        \item Sink edges: For every chosen leaf $v\in L$, there is an edge $(v, t)$ with weight $\infty$ in $T_{L,\tau}$.
    \end{itemize}
\end{definition}

\begin{figure}[ht]
\centering
\begin{tikzpicture}[scale=1, xscale=-1, every node/.style={font=\small}]

    \node[draw,circle,inner sep=2pt] (s) at (-1,1.5) {$s$};
    \coordinate (v) at (0,1.2);
    \node[draw,circle,inner sep=2pt] (t) at (0,-1) {$t$};
    
    \coordinate (L) at (-1,0);
    \coordinate (R) at ( 1,0);
    
    
    \draw (v) node[anchor=east]{$T$} -- (L) -- (R) -- cycle;
    
    \fill (-0.6,0.1) circle (2pt) coordinate (p1);
    \fill (-0.3,0.1) circle (2pt) coordinate (p2);
    \fill ( 0,0.1) circle (2pt) coordinate (p3);
    \fill ( 0.3,0.1) circle (2pt) coordinate (p4);
    \fill ( 0.6,0.1) circle (2pt) coordinate (p5);

    \draw (p1) -- (t);
    \draw (p4) -- node[right]{$\infty$} (t);

    \draw (s) -- node[right]{$\tau$} (p1);
    \draw (s) -- (p2);
    \draw (s) -- (p3);
    \draw (s) -- (p4);
    \draw (s) -- (p5);
\end{tikzpicture}
\caption{The structure of $T_{L,\tau}$.}
\label{figure:structureFlowGadget}
\end{figure}

We start by showing that if the $s$-$t$-mincut is less than $\tau\cdot n$, then $L$ gives a solution with value less than $\tau$.

\begin{lemma}
    \label{lemma:lowmincutImpliesLowF}
    If $\lambda_{T_{L,\tau}}(\{s\},\{t\})<\tau \cdot n$, then $\widehat{\Psi}_T(L) < \tau$.
\end{lemma}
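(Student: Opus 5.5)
Take a minimum $s$-$t$ cut in $T_{L,\tau}$ and read off from its source side a set of unlabeled leaves whose sparsity in $T$ is below $\tau$. Let $X \ni s$ be the source side of a minimum cut, $X \subseteq V\cup\{s,t\}$ with $t\notin X$, so that the cut value $c(X)$ is less than $\tau n$. Since every leaf in $L$ is joined to $t$ by an $\infty$ edge and the cut is finite, $X$ contains no vertex of $L$. Set $S \coloneqq X\cap \leaves{T}$; then $S\subseteq \leaves{T}\setminus L$, so $S$ is a candidate in the minimum defining $\widehat{\Psi}_T(L)$.

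Next split the cut value by edge type. The source edges leaving $X$ are exactly those from $s$ to leaves not in $S$, contributing $\tau(n-|S|)$. The sink edges contribute nothing, since no endpoint of an $\infty$ edge lies in $X$. The remaining edges are tree edges crossing between $X\cap V$ and $V\setminus X$. The set $X\cap V$ contains $S$ and is disjoint from $\leaves{T}\setminus S$, so it is a feasible set in the definition of $\lambda_T(S,\leaves{T}\setminus S)$, and these tree edges weigh at least $\lambda_T(S,\leaves{T}\setminus S)$. Hence
\[
\tau(n-|S|) + \lambda_T(S,\leaves{T}\setminus S) \le c(X) < \tau n,
\]
which gives $\lambda_T(S,\leaves{T}\setminus S) < \tau |S|$.

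It remains to rule out $S=\emptyset$. If $S$ were empty, the left-hand side would already be at least $\tau n$, contradicting the strict inequality; this uses $\tau\ge 0$, and for $\tau<0$ the lemma is vacuous anyway since cut values are nonnegative. So $|S|\ge 1$, and dividing gives $\sparsity_T^{\leaves{T}}(S) < \tau$, hence $\widehat{\Psi}_T(L) < \tau$.

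The only delicate points are the exclusion of $L$ from $X$ via the infinite sink edges and the nonemptiness of $S$; both are handled above. If the paper's convention for $\widehat{\Psi}$ silently excludes $S=\emptyset$, this argument matches it.
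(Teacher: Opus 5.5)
Your proof is correct and follows the paper's argument. Take the source side of a minimum cut, observe that it avoids $L$ because of the infinite sink edges, split its value into $\tau(n-|S|)$ from source edges plus tree edges, and bound the tree part below by $\lambda_T(S,\leaves{T}\setminus S)$, where $S$ is the set of leaves on the source side.

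Your explicit check that $S\neq\emptyset$ is a detail the paper leaves implicit; it also follows at once from $0\le\lambda_T(S,\leaves{T}\setminus S)<\tau|S|$. The aside on $\tau<0$ is unneeded, since the construction assumes $\tau\ge 0$. Its justification is also wrong: with $\tau<0$ the source edges have negative weight, so cut values need not be nonnegative.
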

\begin{proof}
    By the definition of mincut, there is some $S'$ such that:
    \begin{align*}
        \lambda_{T_{L,\tau}}(\{s\},\{t\}) = w_{T_{L,\tau}}(S' \cup \{s\}, (V \setminus S') \cup \{t\})
    \end{align*}
    We notice that $S',L$ must be disjoint because all vertices in $L$ have an edge of infinite capacity to $t$.
    We can write the size of this cut as the sum of two parts, the cut in the original tree $T$ plus the edges from the new sink $s$ to vertices in $(V \setminus S') \cup \{t\}$. This yields
    \begin{align*}
        w_{T_{L,\tau}}&(S' \cup \{s\}, (V \setminus S') \cup \{t\}) \\ &=\, w_T(S', V\setminus S') + w_{T_{L,\tau}}(\{s\}, (V \setminus S') \cup \{t\}).
    \end{align*}
    The second part of the sum consists of the edges from $\{s\}$ to the leaves $\leaves{T}\setminus S'$. Each such edge has weight $\tau$ by construction, so the sum can be written as
    \begin{align*}
        w_{T_{L,\tau}}&(S' \cup \{s\}, (V \setminus S') \cup \{t\}) \\&= \, w_T(S', V\setminus S') + \tau \cdot |\leaves{T}\setminus S'|.
    \end{align*}

    Now we have $$\lambda_{T_{L,\tau}}(\{s\},\{t\}) = w_T(S', V\setminus S') + \tau \cdot |\leaves{T}\setminus S'|.$$ Plugging this into the LHS of the lemma statement $\lambda_{T_{L,\tau}}(\{s\},\{t\})<\tau \cdot n$, we get
    \begin{align*}
        w_T(S', V\setminus S') + \tau \cdot |\leaves{T}\setminus S'| &< \tau \cdot n\\
        \implies\quad w_T(S', V\setminus S') &< \tau \cdot (n-|\leaves{T} \setminus S'|).
    \end{align*}
    Then, let $S = S' \cap\leaves{T}$. We have 
    $$\lambda_T(S, \leaves{T}\setminus S) \le w_T(S', V\setminus S') < \tau \cdot (n-|\leaves{T} \setminus S'|) = \tau \cdot |S|.$$
    So we have $\sparsity_{T}^{ \leaves{T}}(S) < \tau$. Because $S,L$ are disjoint, this gives $\widehat{\Psi}_T(L) < \tau$ and concludes the proof.
\end{proof}

And now we show the other direction. If the solution $\widehat{\Psi}_T(L)<\tau$, then the $s$-$t$-maxflow is less than $\tau\cdot n$.

\begin{lemma}
    \label{lemma:lowFImpliesLowMincut}
    If $\widehat{\Psi}_T(L) < \tau$, then $\lambda_{T_{L,\tau}}(\{s\},\{t\})<\tau \cdot n$.
\end{lemma}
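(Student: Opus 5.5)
We prove the statement by exhibiting an explicit $s$-$t$ cut in $T_{L,\tau}$ of capacity strictly less than $\tau\cdot n$; this reverses the argument of \cref{lemma:lowmincutImpliesLowF}. By definition of $\widehat{\Psi}_T(L)$, the hypothesis $\widehat{\Psi}_T(L)<\tau$ gives a set $S\subseteq \leaves{T}\setminus L$ with $\sparsity_T^{\leaves{T}}(S)<\tau$. Such an $S$ is nonempty, since otherwise the sparsity would be undefined or infinite. Hence
\[
\lambda_T(S,\leaves{T}\setminus S)<\tau\cdot|S|.
\]
By the definition of $\lambda_T$, there is a vertex set $S'$ with $S\subseteq S'\subseteq V\setminus(\leaves{T}\setminus S)$ and $w_T(S',V\setminus S')=\lambda_T(S,\leaves{T}\setminus S)$. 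Thus $S'$ may contain internal vertices of $T$, but its leaves are exactly $S$, i.e.\ $S'\cap\leaves{T}=S$. In particular $S'\cap L=\emptyset$, because $S\cap L=\emptyset$ and $L\subseteq\leaves{T}$.

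Now take the $s$-$t$ cut $(S'\cup\{s\},(V\setminus S')\cup\{t\})$ in $T_{L,\tau}$ and count its crossing edges by type, as in \cref{lemma:lowmincutImpliesLowF}. Tree edges contribute exactly $w_T(S',V\setminus S')$. Source edges go from $s$ to every leaf, and those crossing the cut go to the leaves outside $S'$, namely $\leaves{T}\setminus S$. They contribute $\tau\cdot(n-|S|)$. Sink edges join $v\in L$ to $t$, and none crosses, because $L\subseteq V\setminus S'$ lies on the sink side together with $t$. The infinite edges therefore contribute nothing.

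Summing, we obtain
\[
\lambda_{T_{L,\tau}}(\{s\},\{t\})\le w_T(S',V\setminus S')+\tau(n-|S|)<\tau|S|+\tau(n-|S|)=\tau n.
\]
The only step needing care is that the minimizing set $S'$ for $\lambda_T$ contains no leaf outside $S$, and in particular none of $L$, so that no infinite-capacity sink edge is cut. This follows directly from the constraint $S'\subseteq V\setminus(\leaves{T}\setminus S)$ in the definition of $\lambda_T$. All other steps are bookkeeping.
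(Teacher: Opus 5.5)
Your proof is correct and follows essentially the same route as the paper: you pick a violating $S\subseteq\leaves{T}\setminus L$, take a minimizing $S'$ for $\lambda_T(S,\leaves{T}\setminus S)$, and bound the capacity of the cut $(S'\cup\{s\},(V\setminus S')\cup\{t\})$ by $w_T(S',V\setminus S')+\tau(n-|S|)<\tau n$. Your explicit check that no infinite sink edge crosses this cut, because $S'\cap\leaves{T}=S$ is disjoint from $L$, makes precise a point the paper's proof leaves implicit.
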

\begin{proof}
    By the definition of $\widehat{\Psi}$, there is some $S$ disjoint from $L$ such that:
    \begin{align*}
        &&\widehat{\Psi}_T(L) = \sparsity_{T}^{ \leaves{T}}(S) &< \tau\\
        \implies && \lambda_T(S, \leaves{T}\setminus S) &< \tau \cdot |S|.
    \end{align*}
    Now, by the definition of a minimum cut, there is some $S'$ with $S \subseteq S' \subseteq V\setminus (\leaves{T}\setminus S)$ such that:
    \begin{align*}
        w_T(S', V\setminus S') < \tau \cdot |S|.
    \end{align*}
    We analyze $w_{T_{L,\tau}}(S'\cup \{s\}, (V \setminus S')\cup \{t\})$. As above, we can split this up:
    \begin{align*}
        w_{T_{L,\tau}}&(S' \cup \{s\}, (V \setminus S') \cup \{t\}) \\&=\, w_T(S', V\setminus S') + w_{T_{L,\tau}}(\{s\}, (V \setminus S') \cup \{t\}).
    \end{align*}
    We now bound $w_{T_{L,\tau}}(\{s\}, (V \setminus S') \cup \{t\})$ as in the proof of \Cref{lemma:lowmincutImpliesLowF}. We obtain
    \begin{align*}
        &w_{T_{L,\tau}}(S' \cup \{s\}, (V \setminus S') \cup \{t\}) \\
        =\,& w_T(S', V\setminus S') + \tau \cdot |\leaves{T}\setminus S'|\\
        <\,&\tau \cdot |S| + \tau \cdot |\leaves{T}\setminus S'|.
    \end{align*}
    And because $S\subseteq S'$, we have $\tau \cdot |S| + \tau \cdot |\leaves{T}\setminus S'|\le \tau \cdot|\leaves{T}| = \tau\cdot n$.
    This directly implies $\lambda_{T_{L,\tau}}(\{s\},\{t\})<\tau \cdot n$ and concludes the proof.
\end{proof}

We finally combine the two lemmas to get an equivalence statement.

\begin{corollary}
    $$\widehat{\Psi}_T(L) \ge \tau \iff \lambda_{T_{L,\tau}}(\{s\},\{t\}) = n\cdot \tau$$
\end{corollary}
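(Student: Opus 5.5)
The plan is to combine \cref{lemma:lowmincutImpliesLowF} and \cref{lemma:lowFImpliesLowMincut}, taking the contrapositive of each. Both lemmas are stated with the threshold $\tau\cdot n$ and strict inequalities. So the only missing ingredient is an \emph{a priori} upper bound $\lambda_{T_{L,\tau}}(\{s\},\{t\}) \le n\cdot\tau$. With that bound, ``$\ge n\tau$'' becomes ``$= n\tau$''. I will treat $\tau\ge 0$, which is the range of interest.

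\textbf{Upper bound.} Consider the $s$-$t$ cut that places $s$ alone on the source side, namely the cut $(\{s\}, V\cup\{t\})$. The only edges leaving $\{s\}$ are the $n$ source edges $(s,v)$ for $v\in\leaves{T}$, each of weight $\tau$. Hence this cut has weight exactly $n\tau$, and so $\lambda_{T_{L,\tau}}(\{s\},\{t\})\le n\tau$ always.

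\textbf{Forward direction.} Assume $\widehat{\Psi}_T(L)\ge\tau$. The contrapositive of \cref{lemma:lowmincutImpliesLowF} gives $\lambda_{T_{L,\tau}}(\{s\},\{t\})\ge \tau n$. Together with the upper bound, this yields equality.

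\textbf{Backward direction.} Assume $\lambda_{T_{L,\tau}}(\{s\},\{t\}) = n\tau$. Then in particular the min-cut is not $<\tau n$. By the contrapositive of \cref{lemma:lowFImpliesLowMincut}, $\widehat{\Psi}_T(L)\ge\tau$.

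No step is a real obstacle; the argument is immediate from the two lemmas. The only point needing care is observing the trivial cut around $s$, which converts the inequality into an equality.
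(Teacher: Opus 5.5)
Your proposal is correct and follows the paper's proof exactly: the paper also obtains $\lambda_{T_{L,\tau}}(\{s\},\{t\})\le n\cdot\tau$ from the cut isolating $s$ (the degree of $s$ is $n\tau$), then applies the contrapositives of \cref{lemma:lowmincutImpliesLowF} and \cref{lemma:lowFImpliesLowMincut} for the two directions. Your write-up simply makes explicit which contrapositive gives which direction, which the paper leaves implicit.
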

\begin{proof}
    We have $\lambda_{T_{L,\tau}}(\{s\},\{t\}) \le n\cdot \tau$ because the degree of $s$ is $n\cdot \tau$. Therefore, the corollary follows from \cref{lemma:lowmincutImpliesLowF} and \cref{lemma:lowFImpliesLowMincut}.
\end{proof}

By the mincut-maxflow duality, we can obtain the value of the minimum cut by solving the $s$-$t$-maxflow problem on $T_{L,\tau}$.

This motivates us to define a new problem which aims to construct a set $L$ such that $T_{L,\tau}$ has $s$-$t$-maxflow $\tau\cdot n$. We call this new problem the ``sink selection problem''.
\begin{definition}[Sink Selection Problem]
    \label{definition:sink_selection}
    Given a tree $T=(V,E,w)$ and a parameter $\tau$, find $L\subseteq \leaves{T}$ with minimal $|L|$, such that the $s$-$t$-maxflow in $T_{L,\tau}$ has value $n\cdot \tau$. 
\end{definition}

We observe that the sink selection problem is monotone in $|L|$.
\begin{lemma}
    \label{lemma:sinkselection_monotony}
    For any tree $T$, and two parameters $\tau_1 \le \tau_2$, let $L_1,L_2$ be the optimal solutions to the sink selection problem on $T$ with parameters $\tau_1$ and $\tau_2$ respectively. Then $|L_1|\le |L_2|$
\end{lemma}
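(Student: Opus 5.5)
The plan is to show that $L_2$, the optimal solution for $\tau_2$, is also feasible for parameter $\tau_1$. Minimality of $L_1$ then gives $|L_1| \le |L_2|$ directly. We carry out the feasibility argument through the characterization proved just above instead of reasoning about flows directly.

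First, since $L_2$ is feasible for $\tau_2$, the $s$-$t$-maxflow in $T_{L_2,\tau_2}$ equals $n\cdot\tau_2$. By max-flow/min-cut duality, $\lambda_{T_{L_2,\tau_2}}(\{s\},\{t\}) = n\cdot \tau_2$. The preceding corollary then yields $\widehat{\Psi}_T(L_2) \ge \tau_2$.

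Second, we use that $\widehat{\Psi}_T(L_2)$ depends only on $T$ and $L_2$, not on $\tau$. Hence $\widehat{\Psi}_T(L_2) \ge \tau_2 \ge \tau_1$.

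Third, applying the corollary in the other direction with threshold $\tau_1$ gives $\lambda_{T_{L_2,\tau_1}}(\{s\},\{t\}) = n\cdot \tau_1$. By duality again, the maxflow in $T_{L_2,\tau_1}$ is $n\cdot\tau_1$, so $L_2$ is a feasible solution to the sink selection problem with parameter $\tau_1$. Since $L_1$ is a minimum-cardinality feasible solution for $\tau_1$, we conclude $|L_1|\le|L_2|$.

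There is no real obstacle in this argument. The only care needed is the degenerate case $\tau \le 0$ (or $\tau_1=0$), where all source edges have weight zero. There every $L$, including $L=\emptyset$, is trivially feasible, so $|L_1|=0$ and the claim holds immediately; we treat this case separately if the corollary is only intended for $\tau>0$.

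As a cross-check that avoids the corollary, one can also argue directly from flows. Lowering the source capacities from $\tau_2$ to $\tau_1$ and scaling a saturating flow by $\tau_1/\tau_2$ gives a feasible flow in $T_{L_2,\tau_1}$: every capacity constraint is still respected since we only scale down. This flow saturates all source edges and therefore has value $n\tau_1$.
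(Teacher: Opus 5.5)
Your proof is correct, and your closing ``cross-check'' is exactly the paper's proof. The paper takes a flow of value $n\tau_2$ in $T_{L_2,\tau_2}$, which must saturate every source edge. It multiplies this flow by $\tau_1/\tau_2$ and reads the result as a flow of value $n\tau_1$ in $T_{L_2,\tau_1}$.

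Your main argument takes a different route, through the preceding corollary and max-flow/min-cut duality. It identifies the feasible sets for parameter $\tau$ as exactly $\{L\subseteq\leaves{T} : \widehat{\Psi}_T(L)\ge\tau\}$. Since $\widehat{\Psi}_T(L)$ does not depend on $\tau$, these families are visibly nested, growing as $\tau$ decreases.

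The paper's scaling argument is more elementary and self-contained: it uses only linearity of the flow constraints, not the two cut lemmas or duality. Your route yields more structure. It shows the sink selection optimum at $\tau$ is $\min\{|L| : \widehat{\Psi}_T(L)\ge\tau\}$, which ties monotonicity directly to the LLS objective the later binary search is really optimizing. It also never divides by $\tau_2$. In addition, you make explicit the final step (minimality of $L_1$ among feasible sets), which the paper leaves implicit.

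One small correction to your degenerate-case remark: for $\tau<0$ the source edges have negative, not zero, weight. It is cleaner to restrict to $\tau\ge 0$, which is all the binary search uses.
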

\begin{proof}
    We show that if $\tau_1 \le \tau_2$, then $L_2$ is a solution to the sink selection problem on $(T, \tau_1)$ too. By definition there is a $s$-$t$ flow in $T_{L_2, \tau_2}$ with value $n\cdot \tau_2$. Now we multiply the flow on every edge with $\tau_1 / \tau_2$. This results in a $s$-$t$ flow of value $n\cdot \tau_1$. In $T_{L_2, \tau_2}$ all edges outgoing from $s$ were saturated, so now they have flow $\tau_1$. We can change the capacity of these edges to $\tau_1$, giving the graph $T_{L_2, \tau_1}$ with flow value $n\cdot \tau_1$.
\end{proof}

This monotonicity allows us to search for the optimal $\tau$ using binary search. 

\begin{lemma}
    Given an algorithm that solves the sink selection problem, we can construct an algorithm solving the leaf label selection problem.
\end{lemma}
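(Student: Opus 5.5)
We reduce leaf label selection to sink selection by binary search over the threshold $\tau$. The search relies on the monotonicity in \cref{lemma:sinkselection_monotony} and on the equivalence corollary $\widehat{\Psi}_T(L) \ge \tau \iff \lambda_{T_{L,\tau}}(\{s\},\{t\}) = n\tau$.

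First we translate the statement. For a fixed $\tau$, let $L_\tau$ be an optimal sink selection solution, and set $k(\tau) := |L_\tau|$. By the corollary and max-flow/min-cut, a set $L \subseteq \leaves{T}$ satisfies $\widehat{\Psi}_T(L) \ge \tau$ exactly when $L$ is feasible for sink selection with parameter $\tau$. Hence $k(\tau)$ is the minimum number of leaves needed to reach value at least $\tau$. It follows that $\mathrm{OPT}^{\mathrm{LLS}}_k = \max\{\tau : k(\tau) \le k\}$. Since $k(\cdot)$ is nondecreasing by \cref{lemma:sinkselection_monotony}, the set $\{\tau : k(\tau)\le k\}$ is a down-closed interval. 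This justifies binary search: query the sink selection oracle at $\tau$ and move up if $k(\tau)\le k$, down otherwise. We output $L_{\tau^\ast}$ for the largest feasible $\tau^\ast$ found. It has $|L_{\tau^\ast}| \le k$ and $\widehat{\Psi}_T(L_{\tau^\ast}) \ge \tau^\ast$.

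The main obstacle is making the search finite and exact, since $\tau$ is real-valued. We would discretize using the structure of the objective. Fix an optimal $L^\ast$. Its value $\widehat{\Psi}_T(L^\ast)$ is attained by some $S$ as $\lambda_T(S,\leaves{T}\setminus S)/|S|$. The numerator is a sum of a subset of tree edge weights, and the denominator is an integer in $[1,n]$. So $\mathrm{OPT}$ lies in a finite candidate set.

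To keep the algorithm polynomial, we would use either of two routes.
\begin{itemize}
\item Run a parametric search: a Megiddo-style search, or a Newton/Dinkelbach iteration on $\min_S \lambda_T(S,\cdot) - \tau|S|$.
\item Use the assumption that weights lie in $[1,\mathrm{poly}(n)]$ and binary search over the polynomially bounded range $[0, \mathrm{poly}(n)]$. We stop once the interval is narrower than the minimum gap between distinct candidate ratios. Two distinct values $a/b$ and $a'/b'$ with $b,b' \le n$ differ by at least $1/n^2$ if weights are integral; otherwise we round the weights, losing a $(1+\epsilon)$ factor.
\end{itemize}
Either route uses $O(\log n)$ oracle calls.

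To finish, we argue optimality. Let $\tau^\ast$ be the largest feasible value. Then $\tau^\ast = \mathrm{OPT}$, or it lies within the resolution of the search and therefore equals $\mathrm{OPT}$ by the gap argument. The returned set $L_{\tau^\ast}$ thus satisfies $\widehat{\Psi}_T(L_{\tau^\ast}) \ge \mathrm{OPT}$ and $|L_{\tau^\ast}| \le k$, so it is an optimal LLS solution.
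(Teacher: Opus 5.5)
Your proposal is correct and follows the paper's own route: binary search over $\tau$, justified by the monotonicity lemma together with the equivalence $\widehat{\Psi}_T(L)\ge\tau \iff \lambda_{T_{L,\tau}}(\{s\},\{t\})=n\tau$, over a polynomially bounded range with integral weights. Your $1/n^2$ separation between distinct ratios $a/b$ with $b\le n$ actually supplies the exactness and termination detail that the paper's proof leaves implicit. One small correction: the gap argument does not force the last feasible query point $\tau^\ast$ to equal $\mathrm{OPT}$. What it shows is that $\widehat{\Psi}_T(L_{\tau^\ast})$ and $\mathrm{OPT}$ are both such ratios lying in the final interval $[\tau^\ast,\tau_{\mathrm{hi}})$ of width below $1/n^2$, hence they coincide. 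The parametric-search alternative is unnecessary, and it would not apply directly to a black-box sink selection oracle anyway.
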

\begin{proof}
    We binary search for $\tau$. In every iteration we check whether the solution $L$ to the sink selection problem on $(T,\tau)$ has more or less than $k$ vertices. We can use binary search because we have shown that $|L|$ is monotone in $\tau$ in \cref{lemma:sinkselection_monotony}.

    Let us assume that the weights of the graph $T$ are integral and lie between $1$ and $W$. We call $W$ the aspect ratio of the edge weights. $T$ is connected, so we have $\scriptstyle{\frac{w(C, V\setminus C)}{|C|} \ge \frac{1}{|C|} \ge \frac{1}{n}}$, for every $C\subseteq V\setminus L$. So $\widehat\Psi(L) \ge 1/n$. Further, for every $C\subseteq V\setminus L$, we have $\scriptstyle{\frac{w(C, V\setminus C)}{|C|} \le \frac{n^2 \cdot W}{|C|} \le n^2\cdot W}$, so $\widehat\Psi(L) \le n^2\cdot W$. This means that the binary search runs in $O(\log(n^2 \cdot W))=O(\log(n\cdot W))$ time and finds the optimal solution in the polynomially bounded search space.\footnote {For a standard encoding of edge weights this could give $O(\mathrm{poly}(n))$ runtime. However, the aspect ratio is typically not exponential.} \josia{this only gives $\epsilon$ approximation, for some constant! $\epsilon$...} \simon{Not if we assume that all weights are integers.}
\end{proof}

In the next section we show how we can solve the sink selection problem using dynamic programming and thus resolve the final missing piece. 

\paragraph{Remark on prior work. } After completing and submitting this work, we became aware that a problem equivalent to sink selection was previously solved by \citet{andreev2009simultaneous}. We give our proof for completeness. 

\section{Solving the Flow Problem using Dynamic Programming}
\label{section:DP}

In this section we provide an algorithm and proof for the sink selection problem. To describe our algorithm, we need the tree $T$ to be rooted. Therefore, we fix a root arbitrarily on a non-leaf vertex.\footnote{Often hierarchical decompositions such as tree cut sparsifiers give rise to a natural root.}

We first provide intuition about this problem. Given a tree $T$, and a parameter $\tau$, we interpret all leaves as sources of value $\tau$. Now we need to choose some leaves $L \subseteq \leaves{T}$ as sinks of arbitrarily large capacity, such that the flow becomes routable. Of course we additionally want to choose $L$ such that $|L|$ is as small as possible.

We suggest using dynamic programming with the following state to resolve this problem. For some subtree of vertex $v$, denoted as $T_v$, and some number of allowed sinks $k$, record the maximum amount of flow we can inject into the subtree at the vertex $v$ such that it can still be routed away.

To formalize this, we extend \cref{definition:flowgraph} of our flow gadget. The changes are highlighted in \textcolor{red}{red}:

\simon{Got till here.}

\begin{definition}[Flow Graph]
    Given a tree $T=(V,E,w)$, and a set $L \subseteq \leaves{T}$, and a threshold $\tau\in \mathbb{R}$\textcolor{red}{, and a injected flow $f\in \mathbb{R}$}. We construct $T_{L,\tau,f}$:
    \begin{itemize}
        \item Vertex Set: $V\cup\{s,t\}$
        \item Copy: Every edge $e\in E$ is also present in $T_{L, \tau, f}$ with the same weight.
        \item Source: For every leaf $v\in \leaves{T}$, there is an edge $(s, v)$ with weight $\tau$ in $T_{L, \tau, f}$. \textcolor{red}{Additionally, there is a directed edge $(s, \mathrm{root}(T))$ with weight $f$. If $f$ is negative, the edge is directed the other way with weight $-f$.}
        \item Sink: For every chosen leaf $v\in L$, there is an edge $(v, t)$ with weight $\infty$ in $T_{L,\tau, f}$.
    \end{itemize}
\end{definition}

Now we can formalize the dynamic programming state. For some subtree of vertex $v$, denoted as $T_v$, and some number of allowed sinks $k$, what is the most amount of flow we can inject into the subtree at the vertex $v$?
\begin{definition}[DP-State]
    \label{definition:dpstate}
    The following invariant should always hold:
    \begin{align*}
        &\mathrm{DP}[v][k] = \\&\max \left\{f \mid \exists L,|L|\le k : \lambda_{(T_v)_{L,\tau,f}}(\{s\}, \{t\}) = n_v\cdot \tau + f\right\}
    \end{align*}
    Where $n_v = |\leaves{T_v}|$.\\
    If that set is empty, we define $\mathrm{DP}[v][k] = -\infty$.
\end{definition}

We give the base cases and transitions, and later prove correctness via the invariant stated above. 

\begin{definition}
    We define an auxiliary ``edge-bound'' function for all edges $(u,v)$ in $G$.
    $$
    \mathrm{bound}_{(u,v)}(x) = \begin{cases}
        -\infty & x < -w(u, v)     \\
        w(u, v) & x > w(u, v)      \\
        x       & \text{otherwise}
    \end{cases}
    $$
\end{definition}

\begin{definition}
\label{definition:dptransition}
We define the following base cases and transitions:
\begin{itemize}
\item Base cases for leaves $l\in \leaves{T}$:
\begin{align*}
    \mathrm{DP}[l][k] = \begin{cases}
        -\tau & \text{for}~k=0\\
        \infty &\text{for}~k>0
    \end{cases}
\end{align*}
\item Transition for vertices $v\in V$ with two children $c_1, c_2$:
\begin{align*}
    \mathrm{DP}[v][k] = \max_{0 \le a \le k}\left(\substack{\mathrm{bound}_{(v,c_1)}(\mathrm{DP}[c_1][a]) \\+\, \mathrm{bound}_{(v,c_2)}(\mathrm{DP}[c_2][k-a])}\right)
\end{align*}
\end{itemize}
\end{definition}

Now we show a sort of monotony of the DP-state invariant. Then we show, by induction on the binary tree, that the base cases and the transitions preserve the invariant as described in \cref{definition:dpstate}.

\begin{figure}[ht]
\centering
\begin{tikzpicture}[baseline=(current bounding box.center)]
    \node[circle,draw,inner sep=1.5pt] (s) {$s$};
    \node[circle,draw,inner sep=1.5pt] (l) [right=2.6cm of s] {$\ell$};
    \node[circle,draw,inner sep=1.5pt] (t) [right=2.6cm of l] {$t$};
    
    \draw (s) to[in = 160,out = 20] node[above] {$\tau$} (l);
    \draw[-Latex] (s) to[in = -160,out = -20] node[below] {$f$} (l);
    \draw (l) -- node[below] {$\tiny\begin{cases}0\quad~~ \text{If $v\not \in L$}\\ \infty \quad\text{If $v\in L$}\end{cases}$} (t);
\end{tikzpicture}
\caption{The structure of $(T_l)_{L,\tau,f}$ for some leaf $l$.}
\label{figure:basecase}
\end{figure}
\begin{figure}[ht]
\centering
\begin{tikzpicture}[scale=1, xscale=-1, every node/.style={font=\small}]
    \node[font=\small] at (0,-1.5) {$k_1=2$};

    \node[draw,circle,inner sep=2pt] (s) at (-1,1.5) {$s$};
    \node[draw,circle,inner sep=2pt] (v) at (0,1.2) {$c_1$};
    \node[draw,circle,inner sep=2pt] (t) at (0,-1) {$t$};
    
    \coordinate (L) at (-1,0);
    \coordinate (R) at ( 1,0);
    
    \draw[-Latex] (s) -- node[anchor=south] {$x_1$} (v);
    
    \draw (v) -- (L) -- (R) -- (v);
    
    \fill (-0.6,0.1) circle (2pt) coordinate (p1);
    \fill (-0.3,0.1) circle (2pt) coordinate (p2);
    \fill ( 0,0.1) circle (2pt) coordinate (p3);
    \fill ( 0.3,0.1) circle (2pt) coordinate (p4);
    \fill ( 0.6,0.1) circle (2pt) coordinate (p5);

    \draw (p1) -- (t);
    \draw (p4) -- node[right]{$\infty$} (t);

    \draw (s) -- node[right]{$\tau$} (p1);
    \draw (s) -- (p2);
    \draw (s) -- (p3);
    \draw (s) -- (p4);
    \draw (s) -- (p5);
\end{tikzpicture}
\qquad
\begin{tikzpicture}[scale=1, every node/.style={font=\small}]
    \node[font=\small] at (0,-1.5) {$k_2=1$};

    \node[draw,circle,inner sep=2pt] (s) at (-1,1.5) {$s$};
    \node[draw,circle,inner sep=2pt] (v) at (0,1.2) {$c_2$};
    \node[draw,circle,inner sep=2pt] (t) at (0,-1) {$t$};
    
    \coordinate (L) at (-1,0);
    \coordinate (R) at ( 1,0);
    
    \draw[-Latex] (s) -- node[anchor=south] {$x_2$} (v);
    
    \draw (v) -- (L) -- (R) -- (v);
    
    \fill (-0.6,0.1) circle (2pt) coordinate (p1);
    \fill (-0.3,0.1) circle (2pt) coordinate (p2);
    \fill ( 0,0.1) circle (2pt) coordinate (p3);
    \fill ( 0.3,0.1) circle (2pt) coordinate (p4);
    \fill ( 0.6,0.1) circle (2pt) coordinate (p5);

    \draw (p4) -- node[left]{$\infty$} (t);

    \draw (s) -- node[left]{$\tau$} (p1);
    \draw (s) -- (p2);
    \draw (s) -- (p3);
    \draw (s) -- (p4);
    \draw (s) -- (p5);
\end{tikzpicture}
\caption{The structure of $(T_{c_1})_{L,\tau,f}$ and $(T_{c_2})_{L,\tau,f}$ for the children $c_1, c_2$ of $v$.}
\label{figure:transition1}
\end{figure}
\begin{figure}[ht]
\centering
\begin{tikzpicture}[scale=1, every node/.style={font=\small}]
    \node[font=\small] at (0,-1.5) {$k=k_1+k_2$};
    \node[draw,circle,inner sep=2pt] (v) at (0,2.6) {$v$};
    \node[draw,circle,inner sep=2pt] (s) at (0,1.7) {$s$};
    \node[draw,circle,inner sep=2pt] (c1) at (-2,1.2) {$c_1$};
    \node[draw,circle,inner sep=2pt] (c2) at ( 2,1.2) {$c_2$};
    \node[draw,circle,inner sep=2pt] (t)  at (0,-1) {$t$};

    \draw[-Latex] (s) -- node[right]{$f$} (v);
    \draw (v) -- node[anchor=south] {$a$} (c1);
    \draw (v) -- node[anchor=south] {$b$} (c2);

    \coordinate (L1) at (-3,0);
    \coordinate (R1) at (-1,0);
    \draw (c1) -- (L1) -- (R1) -- (c1);

    \coordinate (L2) at ( 1,0);
    \coordinate (R2) at ( 3,0);
    \draw (c2) -- (L2) -- (R2) -- (c2);

    \fill (-2.6,0.1) circle (2pt) coordinate (p11);
    \fill (-2.3,0.1) circle (2pt) coordinate (p12);
    \fill (-2.0,0.1) circle (2pt) coordinate (p13);
    \fill (-1.7,0.1) circle (2pt) coordinate (p14);
    \fill (-1.4,0.1) circle (2pt) coordinate (p15);

    \fill ( 1.4,0.1) circle (2pt) coordinate (p21);
    \fill ( 1.7,0.1) circle (2pt) coordinate (p22);
    \fill ( 2.0,0.1) circle (2pt) coordinate (p23);
    \fill ( 2.3,0.1) circle (2pt) coordinate (p24);
    \fill ( 2.6,0.1) circle (2pt) coordinate (p25);

    \draw (p11) -- (t);
    \draw (p14) -- node[right]{$\infty$} (t);

    \draw (p24) -- node[left]{$\infty$} (t);

    \draw (s) -- (p11);
    \draw (s) -- (p12);
    \draw (s) -- (p13);
    \draw (s) -- (p14);
    \draw (s) -- node[right]{$\tau$} (p15);

    \draw (s) -- node[left]{$\tau$} (p21);
    \draw (s) -- (p22);
    \draw (s) -- (p23);
    \draw (s) -- (p24);
    \draw (s) -- (p25);
\end{tikzpicture}
\caption{The structure of $(T_v)_{L,\tau,f}$}
\label{figure:transition2}
\end{figure}

\begin{lemma}
    \label{lemma:DPmonotony}
    Given that $\mathrm{DP}[v][k] = f \ge 0$ at some vertex $v$ and budget $k$, we have that $\lambda_{(T_v)_{L,\tau, f'}}(\{s\}, \{t\}) = n_v\cdot\tau + f'$ for all $f'$ with $0\le f'\le f$.
\end{lemma}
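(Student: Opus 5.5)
Since $\mathrm{DP}[v][k] = f$ is attained as a maximum in \cref{definition:dpstate}, there is a set $L$ with $|L|\le k$ such that $\lambda_{(T_v)_{L,\tau,f}}(\{s\},\{t\}) = n_v\cdot\tau + f$. We fix this $L$ and show that the same $L$ works for every $f'$ with $0\le f'\le f$.

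\emph{Upper bound.} In $(T_v)_{L,\tau,f'}$ the total capacity leaving $s$ is $n_v\tau + f'$, because $f'\ge 0$ means the root edge is directed out of $s$. Taking the cut $\{s\}$ against everything else gives $\lambda_{(T_v)_{L,\tau,f'}}(\{s\},\{t\}) \le n_v\tau+f'$.

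\emph{Lower bound, via a flow.} By max-flow/min-cut there is an $s$-$t$ flow $g$ in $(T_v)_{L,\tau,f}$ of value $n_v\tau+f$. This flow saturates every edge leaving $s$: all leaf edges $(s,\ell)$ carry $\tau$ and the edge $(s,v)$ carries $f$. We remove $f-f'$ units of the flow entering at the root. To do this, decompose $g$ into $s$-$t$ paths (cycles can be dropped, since they do not affect the value). The paths whose first edge is $(s,v)$ carry total amount $f$. We scale these paths down by the factor $f'/f$, or drop them entirely if $f=0$, and keep all other paths unchanged. The result is a feasible flow for the following reasons:
\begin{itemize}
\item Flow on each edge only decreases, so all capacity constraints still hold.
\item Conservation holds, since it is a sum of path flows.
\item The flow on $(s,v)$ is now exactly $f'$, so it fits the capacity $f'$ in $(T_v)_{L,\tau,f'}$.
\end{itemize}
The resulting flow has value $n_v\tau+f'$. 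This proves $\lambda_{(T_v)_{L,\tau,f'}}(\{s\},\{t\}) \ge n_v\tau+f'$, and together with the upper bound we get equality.

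Alternatively, one could argue purely with cuts. For any cut $S\ni s$, the capacity in the $f'$-graph differs from that in the $f$-graph by $(f-f')\cdot[v\notin S]$. The cut bound $w \ge n_v\tau+f$ from the $f$-graph then yields $w'\ge n_v\tau+f' - (f-f')[v\in S]$, which fails when $v\in S$. So the cut argument alone is not immediate, and the flow-decomposition argument above is the one to use.

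The main obstacle is handling an undirected graph that also contains a directed edge. The path decomposition must be taken in the directed residual sense: replace each undirected edge by two antiparallel arcs, and cancel flow on opposite arcs so that only one direction is used. After this the decomposition is standard. A remaining edge case is $f=\infty$, which occurs at a leaf in $L$. There we read the statement as saying that the chosen $L$ routes any finite injection. For a leaf in $L$ this is immediate, because all flow can go directly to $t$ through the infinite edge. In general, a finite $f$ suffices for the argument above.
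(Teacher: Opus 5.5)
Your argument is correct, but it takes a different route from the paper.

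\textbf{The paper's route.} The paper argues by induction over the binary tree. At a leaf in $L$, any injection is routable. At an internal vertex, it splits the injected flow as $f=a+b$ on the two child edges. If $a,b\ge 0$, it rescales both by $f'/f$. If one of them is negative, it lowers the nonnegative one by $f-f'$. In both cases it invokes the induction hypothesis at the children.

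\textbf{Your route.} You fix the witness set $L$ for $\mathrm{DP}[v][k]=f$ and work inside the single network $(T_v)_{L,\tau,f}$. A maximum flow saturates every edge out of $s$. Scaling only the paths that start with the injection edge by $f'/f$ gives a feasible flow of value $n_v\tau+f'$ in $(T_v)_{L,\tau,f'}$. The cut $\{s\}$ gives the matching upper bound.

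\textbf{Comparison.} Your version is non-inductive. It uses neither the binary-tree structure nor the paper's somewhat delicate semantics of negative injections at the children. It also shows explicitly that one and the same $L$ works for every $f'\in[0,f]$, which the lemma statement leaves implicit. The paper's version is instead phrased in terms of the child-edge flows $a,b$, mirroring the DP transitions. Your version also fits the way the lemma is used in \cref{lemma:DPcorrectness}: there it is applied at children whose values already satisfy the invariant, so your reading is not circular.

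\textbf{A correction.} Your dismissal of the pure cut argument rests on a sign error. Let $c(S)$ be the capacity of a cut $S\ni s$ excluding the injection edge. Then $w'(S)=c(S)+f'[v\notin S]=w(S)-(f-f')[v\notin S]$. Together with $w(S)\ge n_v\tau+f$, this gives $w'(S)\ge n_v\tau+f'+(f-f')[v\in S]\ge n_v\tau+f'$, not $n_v\tau+f'-(f-f')[v\in S]$.

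So the cut argument works immediately. Equivalently, $\lambda_{(T_v)_{L,\tau,f}}(\{s\},\{t\})-f=\min_S\bigl(c(S)-f\,[v\in S]\bigr)$ is nonincreasing in $f$. This is even shorter than your flow decomposition, and your paragraph on antiparallel arcs and cancellation becomes unnecessary. The flow argument you actually give remains valid.
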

\begin{proof}
    The prove is by induction. The base cases are the leaves. For some leaf $l\in \leaves{T}$ we have $\mathrm{DP}[l][0] = -\tau$ and $\mathrm{DP}[l][1] = \infty$ by definition. The lemma does not cover the first case, so we only need to show it for the second case. We can look at \cref{figure:basecase} and see that for any $f' \in \mathbb{R}$ there is a $s$-$t$ flow of value $\tau+f'$.

    For the general case, as depicted in \cref{figure:transition2}, we notice that we can reduce it to two smaller instances as depicted in \cref{figure:transition1}. Note that $n_v = n_{c_1} + n_{c_2}$. Now we have two cases:
    \begin{itemize}
        \item $a,b\ge 0$: We scale down both $a$ and $b$ by multiplying with $f'/f$. This is allowed by the induction hypothesis. 
        \item $a<0, b\ge 0$ or $a\ge 0, b< 0$: We assume, without loss of generality, that $a<0, b\ge 0$. We set the new $b' = b-(f-f')$. We have
        \begin{align*}
            f = a+b &\implies b = f-a > f\\
            f'\ge 0 &\implies f-f' \le f\\
            &\implies b' = b-(f-f') > 0,
        \end{align*}
        so $b' \ge 0$, and we can apply the induction hypothesis.
    \end{itemize}
    Note that the case $a,b<0$ is impossible because $f=a+b \ge 0$.
\end{proof}

\begin{lemma}
    \label{lemma:DPcorrectness}
    The DP calculated with base cases and transitions as described in \cref{definition:dptransition} satisfies the invariant \cref{definition:dpstate}.
\end{lemma}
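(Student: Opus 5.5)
We prove the claim by structural induction on the rooted binary tree, from the leaves upward. For a fixed $v$ and $k$, we show that the value produced by \cref{definition:dptransition} equals the maximum $f$ in \cref{definition:dpstate}. Throughout, we read ``$\lambda = n_v\tau+f$'' as ``there is a feasible flow in $(T_v)_{L,\tau,f}$ that saturates every source edge.'' For negative $f$ this means $s$ sends $n_v\tau$ into the leaves and $-f$ of it leaves at the root. We then prove two inequalities, $\ge$ (achievability) and $\le$ (optimality).

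\textbf{Base case.} Let $l$ be a leaf, as in \cref{figure:basecase}. If $k=0$, then $L=\emptyset$ and $l$ has no sink edge. The only way to route the $\tau$ units arriving at $l$ is back along the root edge, so the flow is feasible exactly when $f\le -\tau$, and the maximum is $-\tau$. If $k\ge 1$, we take $L=\{l\}$. The infinite edge to $t$ absorbs any amount, so every $f$ is feasible and the value is $\infty$.

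\textbf{Inductive step, achievability.} Let $v$ have children $c_1,c_2$ joined by edges of weights $w_1,w_2$. Fix the split $a$ attaining the maximum, and put $x_1=\mathrm{bound}_{(v,c_1)}(\mathrm{DP}[c_1][a])$ and $x_2=\mathrm{bound}_{(v,c_2)}(\mathrm{DP}[c_2][k-a])$, assuming both are finite. In each case $|x_i|\le w_i$ and $x_i\le \mathrm{DP}[c_i][\cdot]$. We route $x_i$ through the edge $(v,c_i)$, which is allowed since $|x_i|\le w_i$. By the induction hypothesis together with \cref{lemma:DPmonotony}, each child instance then has a saturating flow with injection $x_i$ using at most $a$ (respectively $k-a$) sinks. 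This holds when $x_i\ge 0$ and $x_i\le\mathrm{DP}[c_i]$; when $x_i<0$, the bound function only returns a finite value if $x_i=\mathrm{DP}[c_i]$ exactly, so the hypothesis applies directly. Gluing the two child flows at $v$ with $L=L_1\cup L_2$ gives a saturating flow in $(T_v)_{L,\tau,x_1+x_2}$. If $f=x_1+x_2$ is negative, the extra $-f$ units leave through the directed root edge, which is consistent with the definition.

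\textbf{Inductive step, optimality.} Take any $L$ with $|L|\le k$ and any saturating flow for injection $f$. By flow conservation at $v$, $f$ splits into the net flows $y_1,y_2$ on the child edges, so $f=y_1+y_2$ and $|y_i|\le w_i$. Restricting the flow to $T_{c_i}$ shows that $y_i$ is a feasible injection there with $|L\cap T_{c_i}|$ sinks. The induction hypothesis then gives $y_i\le \mathrm{DP}[c_i][|L\cap T_{c_i}|]$. Since $y_i\ge -w_i$ and $y_i\le w_i$, we get $y_i\le \mathrm{bound}(\mathrm{DP}[c_i])$: the bound is not $-\infty$ because $\mathrm{DP}\ge y_i\ge -w_i$, and it caps at $w_i$. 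Summing over the split $a=|L\cap T_{c_1}|$ gives $f\le \mathrm{DP}[v][k]$. The sink budget needs care: the maximum over $a$ ranges over $a\le k$ with $k-a$ sinks on the other side, while the actual counts may sum to less than $k$. This is harmless because $\mathrm{DP}[c][\cdot]$ is nondecreasing in the budget, which follows immediately from the definition.

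\textbf{Main obstacle.} The delicate part is the treatment of negative injections and of the $\pm\infty$ conventions. Gluing relies on \cref{lemma:DPmonotony} only for $f\ge 0$, so for a negative child value we must use exactly $x_i=\mathrm{DP}[c_i]$ rather than any smaller value. Downward monotonicity for negative $f$ does not hold in general, since forcing more flow out through the root need not be feasible. The bound function is designed precisely so that the achievability step never asks for such a smaller negative injection. We will verify this case by case.
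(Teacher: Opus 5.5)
Your proof is correct and follows the paper's route: induction over the tree, achievability via $\mathrm{bound}$ together with \cref{lemma:DPmonotony}, and optimality by splitting a feasible flow at $v$ into the net flows on the two child edges and applying the induction hypothesis. Three things you spell out are left implicit in the paper: the flow reading of the invariant for negative $f$, the fact that a finite negative bounded value must equal the child's DP value exactly, and monotonicity of the DP in the budget. One harmless slip: under your own reading, where exactly $-f$ leaves at the root, a sinkless leaf admits only $f=-\tau$ rather than every $f\le-\tau$, which does not change the maximum $-\tau$.
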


\begin{proof}
    We show this by induction on the binary tree. For the base case on a leaf $v$, we observe that $T_v$ is just a single vertex. That means $(T_v)_{L,\tau, f}$ looks as in \cref{figure:basecase}.
    If $k=0$, this implies $v\not \in L$, forcing $f = -\tau$. If $k\ge 1$, we chose $L=\{v\}$, allowing arbitrarily large $f$. This immediately gives the base cases:
    $$\mathrm{DP}[l][k] = \begin{cases}
        -\tau & \text{for}~k=0\\
        \infty &\text{for}~k>0
    \end{cases}$$
    
    Now we regard the transitions for a vertex $v$ with two children $c_1, c_2$. We have budget $k$, and we can choose how to distribute that among the two children. The subgraphs of the two children are shown in \cref{figure:transition1}.
    Now we try to combine them, this is depicted in \cref{figure:transition2}.
    We get that $f = a + b$. For this to be a feasible solution, we need: $a \le x_1, b\le x_2$ and $|a|\le w(v,c_1), |b|\le w(v,c_2)$. So by using $a = \mathrm{bound}_{(v,c_1)}(x_1)$ and $b = \mathrm{bound}_{(v,c_2)}(x_2)$ and applying \cref{lemma:DPmonotony}, we get feasibility of the DP solution. That is, the value calculated by the DP is always achievable, and the solution can be recovered using backtracking.

    Now we need to show that the optimal solution can be reproduced with such a construction, and thus be found by the DP. We look at the optimal solution, it has some $a,b$ with $a+b=f$ and $|a|\le w(v,c_1), |b|\le w(v,c_2)$. Also, it must choose some $k_1, k_2$ with $k_1 + k_2 \le k$. By the induction hypothesis, the calculated DP solution at both $c_1$ and $c_2$ with budget $k_1$ and $k_2$ respectively is optimal, so it must be at least that of the optimal flow we are looking at. This directly implies that the DP finds an optimal construction.
\end{proof}

Now we discuss how the optimal solution for the sink selection problem can be recovered. For some $L$ to be a solution to the sink selection problem, we need that the $s$-$t$-maxflow in $T_{L,\tau}$ has value $n\cdot \tau$. The graphs $T_{L,\tau}$ and $T_{L,\tau,f}$ are equivalent for $f=0$. The DP calculates $f$ given some $k$. Using the DP monotony (\cref{lemma:DPmonotony}) and the correctness of the DP (\cref{lemma:DPcorrectness}), we get that for all $k$ where $\mathrm{DP}[\mathrm{root}(T)][k] \ge 0$ there is a solution $L$ with $|L|=k$. So we choose the smallest $k$ with $\mathrm{DP}[\mathrm{root}(T)][k] \ge 0$. The corresponding selection of vertices $L$ can be recovered using standard DP backtracking techniques.

\paragraph{Runtime.} We observe that our dynamic program has $O(|V_T| \cdot k)$ states. Every transition can be computed in $O(k)$ steps. The tree cut sparsifier is a binary tree with $|\leaves{T}| = |V_G|=n$ leaves, so it has $O(n)$ vertices. This gives a runtime of $O(n\cdot k^2)$ for the dynamic program. We need to run a binary search to find the optimal $\tau$. So the total runtime of the algorithm is $O(T_{\textsc{TreeCutSparsifier}} + n\cdot k^2 \cdot \log (n\cdot W))$, where $W$ is the aspect ratio of the edge weights and $T_{\textsc{TreeCutSparsifier}}$ is the time required to compute a tree cut sparsifier. Depending on this time, we get different runtime and approximation trade-offs. Using \cite{RS14} achieves total polynomial runtime with $\tilde{O}(\log^{1.5}n)$ approximation, yielding \Cref{thm:main}. Using \cite{agassy2025improvedtreesparsifiersnearlinear} achieves total $\tilde{O}(n\cdot k^2)$ runtime\footnote{For the aspect ratio of the edge weights $W \in \mathrm{poly}(n)$.} with $\tilde{O}(\log^{2} n)$ approximation.

\section{Experiments}

Our algorithm relies on tree cut sparsifiers, for which there currently are no open source implementations that we are aware of. However, they are usually constructed by repeatedly decomposing a graph along sparse cuts.\footnote{For obtaining state-of-the-art guarantees on worst-case instances, additional delicate properties are necessary.} Therefore, we use sparse-cut heuristics like \textsc{Metis} \cite{KarypisKumarMETIS1997} to build a hierarchical decomposition resembling a tree cut sparsifier. We have implemented multiple such heuristics to compare them against each other and against the algorithms from \cite{Cohen-Addad25,Cesa-Bianchi10,Guillory-Blimes09} on various graphs. \Cref{algorithm:hierdecomp} describes how to achieve such a hierarchical decomposition given an algorithm \textsc{Bisect} that cuts a graph into two parts.

We construct $T$ by first creating a root vertex $r$ which corresponds to the whole graph $G$. Then we find a sparse cut $(S,V\setminus S)$ in $G$ using one of the heuristics. Now we create two new vertices $a,b$ corresponding to the induced subgraphs $G[S]$ and $G[V\setminus S]$ respectively. We add edges $(r,a)$ and $(r,b)$ with weights equal to the weight of the cut $w(S,V\setminus S)$, and recursively build the tree until arriving at leaves corresponding to single vertices in $G$.

Our choice of edge weights ensures that the resulting tree fulfills the lower bound property of a tree cut sparsifier: For all $A_G\subseteq V_G$ we have $w_G(A, V\setminus A) \le \lambda_T(A, \leaves{T}\setminus A)$.

\begin{algorithm}[ht]
  \caption{\textsc{HierarchicalDecomposition}}
  \label{algorithm:hierdecomp}
  \begin{algorithmic}
    \State {\bfseries Input:} Graph $G=(V,E,w)$
    \State {\bfseries Output:} Edgelist of Tree
    \State {If $|G|=1$, \Return $\emptyset$}
    \State $A, B \gets \textsc{Bisect}(G)$
    \State $e_A \gets (V, A, w_G(A, V\setminus A))$
    \State $e_B \gets (V, B, w_G(B, V\setminus B))$
    \State $E_A = \textsc{HierarchicalDecomposition}(G[A])$
    \State $E_B = \textsc{HierarchicalDecomposition}(G[B])$
    \State \Return $\{e_A, e_B\} \cup E_A \cup E_B$
  \end{algorithmic}
\end{algorithm}

In the following we describe which sparsest cut heuristics we used: \textsc{Fiedler} and \textsc{Metis} \cite{10.1145/1007352.1007372, KarypisKumarMETIS1997}. 

\paragraph{Sparsest cut via \textit{Fiedler Vector}.}

We calculate the Fiedler vector of the graph and run spectral sweep to choose a cut with low sparsity \cite{10.1145/1007352.1007372}. The Fiedler vector, or algebraic connectivity, is the Eigenvector corresponding to the second smallest Eigenvalue of a graph Laplacian.\footnote{The smallest Eigenvalue is $0$.}
See \cref{algorithm:bisect_spectral} in \cref{appendix:further_experiments}. We remark that the computation of the Fiedler vector could be performed on a GPU in a straightforward manner.

We also tested a slight adaptation of this algorithm which forces the bisection to be balanced. In this case the algorithm has an additional parameter $\beta$ which specifies how balanced the cut should be. If a cut $(A,B)$ is chosen with either $\min(|A|,|B|) \le n\cdot \beta$ we disregard that cut and choose the next best one. This can give big runtime improvements at the expense of quality.\footnote{This strategy ensures that the resulting tree cut sparsifier is balanced.} 

\paragraph{Sparsest cut with \textit{METIS} \cite{KarypisKumarMETIS1997}.}
We use METIS \cite{KarypisKumarMETIS1997} to calculate bisections. We run the algorithm $O(\sqrt{n})$ times with different target partition weights. They are distributed geometrically between $1$ and $n/2$. Then we choose the partition with the best sparsity. See \cref{algorithm:bisect_metis} in \cref{appendix:further_experiments}. We point out that the METIS calls in this algorithm could be parallelized naively.

\paragraph{Experimental Setup.\footnote{We provide the code used for running the experiments on GitHub: \url{https://github.com/josia-john/icml2026-graph-label-selection}}}
We compare our algorithm to the algorithms from \cite{Cohen-Addad25,Cesa-Bianchi10,Guillory-Blimes09}. For nondeterministic algorithms we ran the experiment 10 times and show the standard deviation as a shaded area. We evaluate our algorithm with the following bisect algorithms/sparse cut heuristics:
\begin{quote}
    \textsc{Fiedler}\\[5pt]
    \textsc{FiedlerBalanced} with $\beta\in\{0.01, 0.1\}$\\[5pt]
    \textsc{Metis} with $\mathrm{\#samples}\in\{\sqrt{n}, 10\sqrt{n}, 100\sqrt{n}\}$
\end{quote}

We test our algorithms on various real world graphs from the Stanford Network Analysis Project (SNAP) \cite{snapnets}:

\begin{center}
\begin{tabular}{l@{\hspace{2em}}l@{}rl@{}r}
ca-GrQc      & $|V|=\:$ & 4\,158   & $|E|=\:$ & 13\,428 \\[-4pt]
\multicolumn{5}{l}{\scriptsize{\cite{ca-LeskovecKleinbergFaloutsos2007GraphEvolution}}} \\
ca-CondMat   & $|V|=\:$ & 21\,363  & $|E|=\:$ & 91\,342 \\[-4pt]
\multicolumn{5}{l}{\scriptsize{\cite{ca-LeskovecKleinbergFaloutsos2007GraphEvolution}}} \\
ego-facebook & $|V|=\:$ & 4\,039   & $|E|=\:$ & 88\,234 \\[-4pt]
\multicolumn{5}{l}{\scriptsize{\cite{ego-McAuleyLeskovec2012SocialCircles}}} \\
com-dblp     & $|V|=\:$ & 317\,080 & $|E|=\:$ & 1\,049\,866 \\[-4pt]
\multicolumn{5}{l}{\scriptsize{\cite{com-YangLeskovec2012GroundTruthCommunities}}} \\
\end{tabular}
\end{center}

Additionally we tested on the Davis Southern Women Graph ($|V| = 32, |E| = 89$) for interpretable results on a small graph \cite{DavisGardnerGardner1941DeepSouth}. 

The graphs were preprocessed by choosing the largest connected component and deleting self-loops. 

The experiments were run on a \textit{AMD Ryzen Threadripper PRO 7955WX} processor (16 cores / 32 threads, up to 5.3 GHz). The different algorithms have very different runtime scaling behavior. Due to this we could not run all algorithms on all graphs, especially for large $k$. Therefore, some algorithms have more data points than others. For a runtime comparison see \cref{table:experiments-runtime}.

\begin{table}[ht]
    \caption{Runtime (seconds) of all algorithms on ca-GrQc for $k \in \{10, 50, 100\}$. Real time denotes wall-clock time; user time denotes CPU time spent in user mode. The runtimes should be interpreted with caution: the experiments were not conducted in a fully controlled environment (e.g., multiple processes were running concurrently), and the implementations were not tuned for performance. Nevertheless, the table provides a coarse comparison of orders of magnitude and relative trends across methods.}
    \label{table:experiments-runtime}
    \begin{center}
        \begin{small}
            \begin{tabular}{|l|l||r|r|r|}
                \hline
                Budget        & time     & $k = 10$ & $k = 50$     & $k = 100$ \\
                \hline
                Guillory Bilmes           & real     & $144$    & $842$   & $1835$ \\
                              & user     & $645$    & $1329$   & $2303$ \\
                \hline
                Cesa-Bianchi et al.         & real     & $14$      & $8$   & $13$ \\
                              & user     & $13$      & $8$   & $12$ \\
                \hline
                Cohen-Addad et al.           & real     & $4967$ &   $15586$   & $22956$ \\
                              & user     & $67808$ & $222014$   & $432845$ \\
                \hline
                Ours \textsc{Fiedler} & real     & $22$      & $21$   & $22$ \\
                              & user     & $31$      & $30$   & $31$ \\
                \hline
            \end{tabular}
        \end{small}
    \end{center}
    \vskip -0.1in
\end{table}

\paragraph{Results.}

Both \textsc{Fiedler} and \textsc{Metis} perform well overall, with \textsc{FiedlerBalanced} running a lot faster at the cost of quality. \Cref{figure:ca-GrQc,figure:ca-CondMat} show the comparison. We achieve similar quality to current state of the art \cite{Cohen-Addad25}. But our algorithm runs magnitudes faster. It has no problem running for many different values $k$ on a graph like ca-CondMat, see \cref{figure:ca-CondMat}, while other algorithms struggle even for just one value $k$ on smaller graphs like ca-GrQc. 

We remark that the quality often remains constant for a sequence of budgets because there are numerous disjoint sparse cut with equal sparsity (For example degree one vertices). 

Our algorithm can handle quite large graphs. Using \textsc{Metis} with $\mathrm{\#samples}=\sqrt{n}$ the algorithm runs within a few hours on com-dblp. We present those results in \cref{table:com-dblp}.
For more results refer to \cref{appendix:further_experiments}.

\begin{figure}[ht]
  \vskip 0.2in
  \begin{center}
    \centerline{\includegraphics[width=\columnwidth]{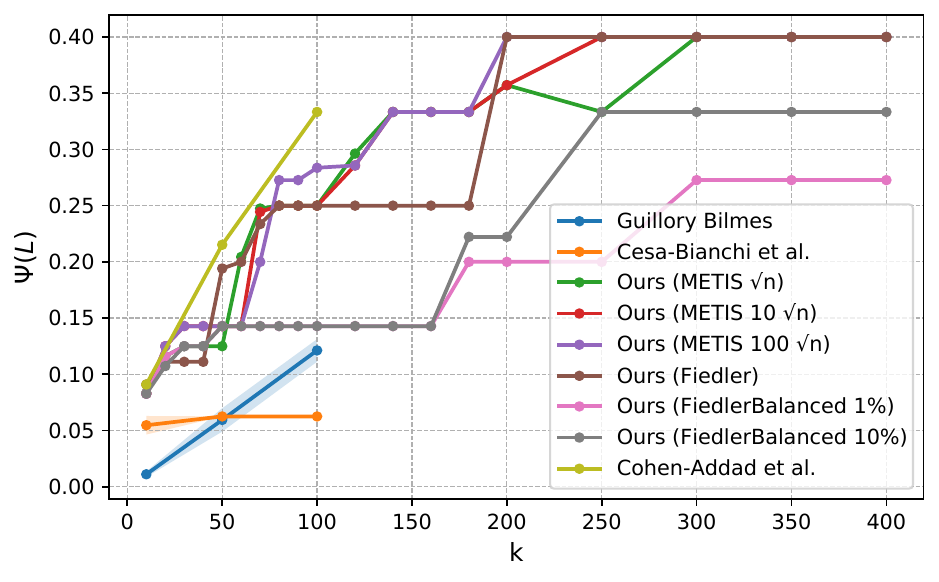}}
    \caption{
      Comparison of all algorithms on ca-GrQc. Due to runtime constraints, we could not run all algorithms for every value of $k$.
    }
    \label{figure:ca-GrQc}
  \end{center}
\end{figure}
\begin{figure}[ht]
  \vskip 0.2in
  \begin{center}
    \centerline{\includegraphics[width=\columnwidth]{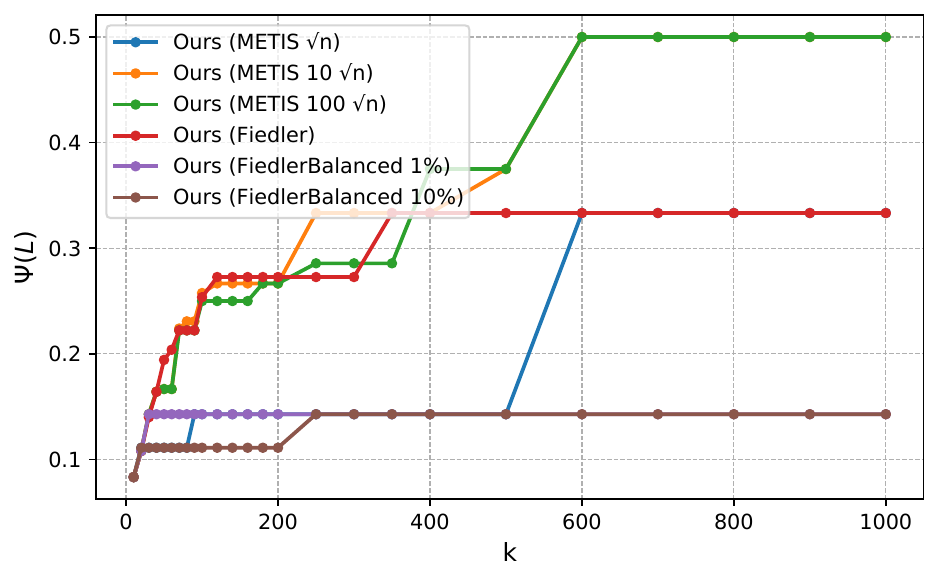}}
    \caption{
      Comparison of our algorithm using different \textsc{Bisect} algorithms on ca-CondMat. 
    }
    \label{figure:ca-CondMat}
  \end{center}
\end{figure}

\begin{table}[ht]
    \caption{Performance of \textsc{Metis} on com-dblp.}
    \label{table:com-dblp}
    \begin{center}
        \begin{small}
            \begin{tabular}{|l||l|l|l|}
                \hline
                Budget            & $k = 50$ & $k = 500$     & $k = 5000$ \\
                \hline
                Ours (\textsc{Metis} $\sqrt{n}$)       & $0.030$      & $0.048$   & $0.083$ \\
                \hline
            \end{tabular}
        \end{small}
    \end{center}
    \vskip -0.1in
\end{table}

\section{Conclusion}

We present the first approximation algorithm for the graph label selection problem on general graphs that does not rely on resource augmentation. We also describe how to adapt the algorithm to give an optimal solution for the graph label selection problem on weighted trees. 

It remains an interesting open problem to scale algorithms to massive graphs. One of the main bottlenecks in our framework is the dynamic program. An approximate version of this dynamic program could be solvable in linear time. \josia{One could also explore whether improved and more modern sparsest cut heuristics can achieve even better results.}

\section*{Acknowledgements}
The research of Simon Meierhans and Maximilian Probst Gutenberg leading to these results has received funding from grant no. 200021 204787 of the Swiss National Science Foundation. Simon Meierhans is supported by a Google PhD Fellowship.




\section*{Impact Statement}

At present, our results are primarily theoretical in nature. In the future, the techniques presented in this paper could be used to improve the reasoning capabilities of large language models and should therefore be used with caution as models become more capable. 

\clearpage





\bibliography{graph_learning}
\bibliographystyle{icml2026}

\clearpage
\appendix

\section{Solving Graph Label Selection optimally on Weighted Trees}
\label{appendix:weightedTreesOptimal}

We show that we can solve Graph Label Selection optimally on weighted trees by constructing a perfect tree cut sparsifier for weighted trees. Then the result follows immediately from.

\begin{lemma}
    \label{lemma:perfectTreeCutSparsifierForTree}
    Given a tree $G=(V,E,w)$. There is a $1$ tree cut sparsifier for $G$.
\end{lemma}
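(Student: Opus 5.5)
Given the tree $G=(V,E,w)$, I would build $T$ from $G$ itself by attaching a pendant leaf to every vertex. Concretely, for every $v\in V$ I create a new tree vertex $v'$ and an edge $(v',v)$ of weight $\infty$. The original vertex $v$ becomes an internal node, and the new vertex $v'$ is identified with $v$ as a leaf of $T$. All original edges of $G$ keep their weights. Then $T$ is a tree whose leaf set is (a copy of) $V$, as \cref{definition:treecutsparsifier} requires. If an original vertex of degree one should remain a leaf, this is harmless, since it then has degree two and is internal. I will keep $\infty$ edges as in \cref{lemma:binarysparsifier}; any cut of finite value simply never uses them. The last step is to show that $\lambda_T(A,\leaves{T}\setminus A)=w_G(A,V\setminus A)$ for every $A\subseteq V$.

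For the upper bound $\lambda_T\le w_G$, take $S=\{v,v' : v\in A\}$. This set separates $A$ from $\leaves{T}\setminus A$, and the edges it cuts are exactly the original edges of $G$ between $A$ and $V\setminus A$. Hence $w_T(S,V_T\setminus S)=w_G(A,V\setminus A)$.

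For the lower bound, take any finite-value separating set $S$ with $A\subseteq S$ (leaf copies) and $S$ disjoint from the remaining leaf copies. Because every $\infty$ edge is uncut, $v\in S$ if and only if $v'\in S$. So $S$ is forced to be exactly $\{v,v': v\in A\}$, and its cut value is $w_G(A,V\setminus A)$. If instead $S$ cuts an $\infty$ edge, its value is $\infty$, which is larger still. Therefore $\lambda_T(A,\leaves{T}\setminus A)=w_G(A,V\setminus A)$, and $T$ is a $1$ tree cut sparsifier.

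The only real subtlety is the convention on infinite weights. If the intended definition needs finite weights, I would replace $\infty$ by $1+\sum_e w(e)$. Any cut using such an edge then exceeds the forced cut, so the minimum is unchanged. Combined with \cref{lemma:binarysparsifier}, the DP of \cref{section:DP}, and \cref{corollary:treeEquiv} with $\alpha=1$, this yields an optimal algorithm on weighted trees.
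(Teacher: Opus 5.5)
Your proof is correct and follows essentially the same route as the paper. You attach pendant leaves via $\infty$-weight edges, observe that any finite separating cut must put each original vertex on the same side as its pendant leaf, and conclude that the cut is forced and equals $w_G(A,V\setminus A)$. The differences are cosmetic: the paper attaches pendant leaves only to internal vertices of $G$ and keeps $G$'s leaves as leaves of $T$, whereas you attach one to every vertex, and your finite replacement $1+\sum_e w(e)$ for $\infty$ is a small extra the paper does not discuss.
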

\begin{figure}[ht]
    \centering
    \begin{forest}
        for tree = {circle,draw}
        [0
            [1, edge label={node[midway,above,font=\scriptsize]{$w_1$}}
                [4, edge label={node[midway,left,font=\scriptsize]{$w_4$}}]
                [5, edge label={node[midway,auto,font=\scriptsize]{$w_5$}}
                    [7, edge label={node[midway,left,font=\scriptsize]{$w_7$}}]
                ]
            ]
            [2, edge label={node[midway,left,font=\scriptsize]{$w_2$}}
                [6, edge label={node[midway,left,font=\scriptsize]{$w_6$}}]
            ]
            [3, edge label={node[midway,below,font=\scriptsize]{$w_3$}}]
        ]
    \end{forest}
    \begin{forest}
        for tree = {circle,draw}
        [0
            [1, edge label={node[midway,above,font=\scriptsize]{$w_1$}}
                [4, edge label={node[midway,left,font=\scriptsize]{$w_4$}}]
                [5, edge label={node[midway,auto,font=\scriptsize]{$w_5$}}
                    [7, edge label={node[midway,left,font=\scriptsize]{$w_7$}}]
                    [5', edge label={node[midway,right,font=\scriptsize]{$\infty$}}, draw=red]
                ]
                [1', edge label={node[midway,auto,font=\scriptsize]{$\infty$}}, draw=red]
            ]
            [2, edge label={node[midway,left,font=\scriptsize]{$w_2$}}
                [6, edge label={node[midway,left,font=\scriptsize]{$w_6$}}]
                [2', edge label={node[midway,right,font=\scriptsize]{$\infty$}}, draw=red]
            ]
            [3, edge label={node[midway,below,font=\scriptsize]{$w_3$}}]
            [0', edge label={node[midway,auto,font=\scriptsize]{$\infty$}}, draw=red]
        ]
    \end{forest}
    \caption{Transforming a tree $G$ (top) into a $1$ tree cut sparsifier $T$ (bottom). We connect a new leaf with weight $\infty$ to every internal vertex in $G$.}
    \label{figure:perfectTreeCutSparsifierForTree}
\end{figure}
\begin{proof}
Let $G=(V,E,w)$ be a weighted tree. We construct a tree
$T=(V_T,E_T,w_T)$ as follows (see Figure~\ref{figure:perfectTreeCutSparsifierForTree}).
Start with $G$ and for every internal vertex $v\in V\setminus\leaves{G}$ add a new leaf $v'$ and an edge
$(v,v')$ with weight $\infty$.

Now we have a natural one to one mapping between the leaves $\leaves{T}$ in $T$ and the vertices $V$ in $G$. For $v\in \leaves{T}$ it corresponds to $v\in V$, and for $v' \in \leaves{T}$ it corresponds to $v\in V$.

We show that given any $A \subseteq V$ and its mapping to the leaves of the tree cut sparsifier $A'\subseteq \leaves{T}$ with
$$A' = \{v' \mid v \in A, v'\in V_T\} \cup \{v \mid v\in A,v'\not\in V_T\}$$
the following holds:
$$
\lambda_T(A',\leaves{T}\setminus A') = w_G(A,V\setminus A),
$$
which implies that $T$ is a $1$ tree cut sparsifier according to
Definition~\ref{definition:treecutsparsifier}.

Let us analyze the set $S$ materializing the mincut
$$\lambda_T(A',\leaves{T}\setminus A') = \min_{A'\subseteq S \subseteq V_T\setminus (\leaves{T}\setminus A')} w_T(S, V_T\setminus S).$$
For any vertex $v \in A'$ we must, by definition, have $v\in S$. For every vertex $v\in \leaves{T}\setminus A'$ we must, by definition, have $v\not\in S$. This leaves us with internal vertices $v\in V_T\setminus \leaves{T}$. By our construction, every internal vertex $v$ has a leaf $v'$, for which we know whether it is in $S$ or not. Assuming $v'\in S$, if we would choose $v\not \in S$, then the edge $(v,v')$ with weight $\infty$ would contribute to the cut size, making it $\infty$ too. So we must choose $v\in S$. If $v'\not\in S$, then $v\not \in S$ by the same argument.

So we must choose $S = A \cup A'$. All $\infty$ edges are not contributing, so the cut is equivalent to the cut $(A, V\setminus A)$ in $G$, giving $$\lambda_T(A',\leaves{T}\setminus A') = w_T(S,V_T \setminus S) = w_G(A,V\setminus A),$$ concluding our proof.
\end{proof}

\section{Generalizing to Vertex Importance}
\label{appendix:verteximportance}

In this article we have described how to maximize the function $\Psi(L)=\min _{C \subseteq V \backslash L} w(C, V \backslash C) /|C|$. But our algorithm can easily be extended to allow for maximization of any function of the form
$$\min _{C \subseteq V \backslash L}\frac{w(C, V \backslash C)}{\sum_{c\in C} f(c)},$$
where $f: V \to \mathbb{R}_{\ge0}$ is an ``importance'' function of vertices. The standard label selection problem tries to maximize a sparsity metric. That would be $f(v)=1$ for all $v\in V$. This extension allows us to optimize for many more metrics, with one important example being a conductance like metric:
$$\min _{C \subseteq V \backslash L}\frac{w(C, V \backslash C)}{\mathrm{vol}(C)}$$
Here we use the fact that $\mathrm{vol}(C) = \sum_{c\in C} \deg(c)$.

The algorithm is almost the same as we have described above, the main difference is just the base cases of the dynamic program. Here we give the slightly modified proofs.

We start by defining the new objective function, for both GLS and LLS.

\begin{definition}[Graph Label Selection Problem with Vertex Importance]
    Given a graph $G = (V, E, w)$, a vertex importance function $f:V\to \mathbb{R}$, and a budget $k\in \mathbb{N}$, find a set $L\subset V, |L|\le k$ that maximizes the objective $$\Psi^f_G(L) \coloneqq \min _{C \subseteq V \backslash L}\frac{w(C, V \backslash C)}{\sum_{c\in C} f(c)}$$
\end{definition}

\begin{definition}[Leaf Label Selection Problem with Vertex Importance]
        Given a tree $T = (V, E, w)$ and a budget $k \in \mathbb{N}$, find a set $L \subseteq \leaves{T}, |L| \le k$ that maximizes the objective
        $$\widehat{\Psi}^f_T(L) = \min_{C \subseteq \leaves{T}\setminus L} \frac{\lambda_T(C,\leaves{T}\setminus C)}{\sum_{c\in C} f(c)}$$
\end{definition}

In the following we describe how all three steps of our algorithm can be adapted to work with vertex importance.

\paragraph{Reducing to Binary Tree.}

Given is the graph $G=(V,E, w)$. From \cref{lemma:binarysparsifier} we get that we have some binary $\alpha$ tree cut sparsifier $T = (V_T, E_T, w)$. Now we need to show  \cref{lemma:treecutsparsifierEq1,lemma:treecutsparsifierEq2} with respect to the new importance function $f:V\to \mathbb{R}$.

\begin{lemma}
    \label{lemma:treecutsparsifierEq1_vertexImportance}
    Given a graph $G=(V_G,E_G,w_G)$ and a corresponding $\alpha$ tree cut sparsifier $T=(V_T, E_T, w_T)$. For any solution $L\subseteq \leaves{T}=V$ we have:
    $$\widehat{\Psi}^f_T(L) \le \alpha \Psi^f_G(L)$$
\end{lemma}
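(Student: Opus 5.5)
The plan is to repeat the chain of (in)equalities from the proof of \cref{lemma:treecutsparsifierEq1}, with the cardinality $|S|$ replaced by the importance mass $\sum_{c\in S} f(c)$ throughout. First I would note that the leaves of $T$ coincide with $V$, so both minima range over the same family of sets, namely $S\subseteq V\setminus L$. Consequently the denominator $\sum_{c\in S} f(c)$ is the identical number on both sides. Only the numerators differ, and they are compared directly by \cref{definition:treecutsparsifier}.

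Concretely, for each fixed $S\subseteq V\setminus L$, the upper half of the sparsifier guarantee gives $\lambda_T(S,\leaves{T}\setminus S)\le \alpha\, w_G(S,V\setminus S)$. Dividing by $\sum_{c\in S}f(c)$, which is nonnegative, preserves the inequality. This yields
\begin{align*}
\frac{\lambda_T(S,\leaves{T}\setminus S)}{\sum_{c\in S} f(c)} \le \alpha\cdot \frac{w_G(S,V\setminus S)}{\sum_{c\in S} f(c)}.
\end{align*}
Taking the minimum over all $S\subseteq V\setminus L$ on both sides then gives $\widehat{\Psi}^f_T(L)\le \alpha\,\Psi^f_G(L)$. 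The reason is that the minimum of pointwise smaller quantities is smaller, and $\alpha$ can be pulled out of the minimum since $\alpha>0$.

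The only genuine obstacle is the case of a zero denominator, where some $S$ has $\sum_{c\in S} f(c)=0$ because $f(v)=0$ is allowed. There I would fix the convention, consistent on both sides, that such a ratio equals $+\infty$ when the numerator is positive, and also when the numerator is $0$ (with $S=\emptyset$ excluded as in the original objective). I would then check that the inequality of numerators still gives the pointwise comparison. If $w_G(S,V\setminus S)>0$, the right-hand ratio is $+\infty$ and the comparison holds trivially. If $w_G(S,V\setminus S)=0$, the sparsifier bound forces $\lambda_T(S,\leaves{T}\setminus S)=0$ as well, so both ratios receive the same value under the convention. The pointwise comparison therefore holds for every $S$, and the minimization step goes through unchanged. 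A symmetric argument using the lower sparsifier bound would give the companion inequality $\Psi^f_G(L)\le\widehat{\Psi}^f_T(L)$, which I would record alongside it.
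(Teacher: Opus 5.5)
Your proposal is correct and is essentially the paper's proof: both use $\leaves{T} = V$ so that the two minima range over the same sets, apply the upper sparsifier bound $\lambda_T(S,\leaves{T}\setminus S)\le \alpha\, w_G(S,V\setminus S)$ termwise over the shared denominator $\sum_{c\in S} f(c)$, and then take the minimum and pull out $\alpha$. Your separate handling of zero-importance denominators is a harmless refinement that the paper leaves implicit.
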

\begin{proof}
    \begin{align*}
        \widehat{\Psi}^f_T(L) & = \min_{C \subseteq \leaves{T}\setminus L} \frac{\lambda_T(C,\leaves{T}\setminus C)}{\sum_{c\in C} f(c)}\\
        & \le \min_{C\subseteq V\setminus L} \frac{\alpha \cdot w_G(C, V\setminus C)}{\sum_{c\in C} f(c)}\\
        & = \alpha \Psi^f_G(L)  
    \end{align*}
\end{proof}

\begin{lemma}
    \label{lemma:treecutsparsifierEq2_vertexImportance}
    Given a graph $G=(V_G,E_G,w_G)$ and a corresponding $\alpha$ tree cut sparsifier $T=(V_T, E_T, w_T)$. For any solution $L\subseteq V=\leaves{T}$ we have:
    $$\widehat{\Psi}^f_T(L) \ge \Psi^f_G(L)$$
\end{lemma}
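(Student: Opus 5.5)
The plan is to mirror the (omitted) lower-bound half of \cref{lemma:treecutsparsifierEq1}: compare the two minimizations term by term, using only the lower-bound side of \cref{definition:treecutsparsifier}. The key observation is that the two objectives range over exactly the same family of candidate sets. Because $\leaves{T}=V$, a set $C\subseteq\leaves{T}\setminus L$ is the same thing as a set $C\subseteq V\setminus L$. Moreover, the denominator $\sum_{c\in C}f(c)$ is the same expression in both objectives, since $f$ is defined on $V=\leaves{T}$. So it suffices to compare the numerators for each fixed $C$.

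First I fix an arbitrary $C\subseteq\leaves{T}\setminus L$. The left inequality of \cref{definition:treecutsparsifier}, applied with $A=C$, gives
\[
\lambda_T(C,\leaves{T}\setminus C)\ \ge\ w_G(C,V\setminus C).
\]
Since $f\ge 0$, I divide both sides by $\sum_{c\in C}f(c)$ and obtain
\[
\frac{\lambda_T(C,\leaves{T}\setminus C)}{\sum_{c\in C}f(c)}\ \ge\ \frac{w_G(C,V\setminus C)}{\sum_{c\in C}f(c)}\ \ge\ \Psi^f_G(L).
\]
The last step holds because $C$ is feasible for the minimization defining $\Psi^f_G(L)$. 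Finally, taking the minimum over all $C$ on the left-hand side yields $\widehat{\Psi}^f_T(L)\ge\Psi^f_G(L)$.

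The only delicate point is the division by $\sum_{c\in C}f(c)$, which may be zero when $f$ vanishes on $C$, for example when $C=\emptyset$. I handle this with the convention already implicit in the unweighted case: such ratios are $+\infty$ (or such $C$ are excluded) in both objectives alike. Under that convention, if the denominator is zero the inequality for that $C$ is trivial, and otherwise the argument above goes through. I also note that nonnegativity of $f$, stated as $f:V\to\mathbb{R}_{\ge0}$ earlier in this appendix, is needed so that dividing preserves the direction of the inequality. Apart from this, the proof is routine.
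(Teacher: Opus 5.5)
Your proof is correct and follows the same route as the paper. Both compare the two minimizations term by term over the same family of sets $C\subseteq V\setminus L$, using only the lower-bound inequality of \cref{definition:treecutsparsifier} and the common denominator; your remarks on nonnegativity of $f$ and on zero denominators add care that the paper's one-line chain of inequalities omits, without changing the argument.
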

\begin{proof}
    \begin{align*}
        \widehat{\Psi}^f_T(L) &= \min_{C \subseteq \leaves{T}\setminus L} \frac{\lambda_T(C,\leaves{T}\setminus C)}{\sum_{c\in C} f(c)}\\
        &\ge \min_{C\subseteq V\setminus L} \frac{w_G(C, V\setminus C)}{\sum_{c\in C} f(c)}\\
        &= \Psi^f_G(L)
    \end{align*}
\end{proof}

\begin{corollary}
    \label{corollary:treeEquiv_vertexImportance}
    Given a graph $G=(V,E,w)$, a vertex importance function $f$, and a corresponding $\alpha$ tree cut sparsifier $T=(V_T, E_T, w_T)$. Let $L'$ be the optimal solution of LLS on $T$. Then $L'$ is a solution for GLS with approximation factor $\alpha$.
\end{corollary}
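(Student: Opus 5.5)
The plan is to mirror the proof of \cref{corollary:treeEquiv}, with \cref{lemma:treecutsparsifierEq1,lemma:treecutsparsifierEq2} replaced by their vertex-importance analogues \cref{lemma:treecutsparsifierEq1_vertexImportance,lemma:treecutsparsifierEq2_vertexImportance}. Let $L^\ast$ be an optimal solution of GLS with vertex importance on $G$, so $|L^\ast|\le k$. Since $\leaves{T}=V$, the set $L^\ast$ is also a feasible solution of LLS with vertex importance on $T$. Conversely, $L'\subseteq\leaves{T}=V$ with $|L'|\le k$ is a feasible solution of GLS. So only the quality guarantee remains to be shown.

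The chain of inequalities is
\begin{align*}
\alpha\,\Psi^f_G(L') &\ge \widehat{\Psi}^f_T(L') && \text{(\cref{lemma:treecutsparsifierEq1_vertexImportance})}\\
&\ge \widehat{\Psi}^f_T(L^\ast) && \text{(optimality of $L'$ for LLS)}\\
&\ge \Psi^f_G(L^\ast) && \text{(\cref{lemma:treecutsparsifierEq2_vertexImportance})},
\end{align*}
which gives $\alpha\,\Psi^f_G(L')\ge \Psi^f_G(L^\ast)=\text{OPT}_k$. This is exactly the claimed approximation factor $\alpha$.

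There is no real obstacle here. The one point to check is that the two lemmas are valid as invoked, and this rests on two facts. First, the minimization ranges over the same family of sets $C\subseteq V\setminus L=\leaves{T}\setminus L$ on both sides. Second, the denominator $\sum_{c\in C} f(c)$ is identical in $G$ and in $T$, because $f$ is defined on $V=\leaves{T}$. Given that, the sparsifier inequality of \cref{definition:treecutsparsifier} transfers termwise.

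One degenerate case needs a remark. If $f(c)=0$ for all $c\in C$, the ratio should be interpreted consistently in both objectives (e.g. as $+\infty$, or such $C$ excluded). With this convention the termwise comparison still holds, since $0\le w_G\le\lambda_T\le\alpha w_G$ preserves the ordering of the numerators.
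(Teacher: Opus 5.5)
Your proof is correct and is the paper's argument: the same chain $\alpha\Psi^f_G(L')\ge\widehat{\Psi}^f_T(L')\ge\widehat{\Psi}^f_T(L^\ast)\ge\Psi^f_G(L^\ast)$, using \cref{lemma:treecutsparsifierEq1_vertexImportance}, optimality of $L'$ for LLS, and \cref{lemma:treecutsparsifierEq2_vertexImportance}. Your remarks make explicit three points the paper leaves implicit: $L^\ast$ is feasible for LLS because $\leaves{T}=V$, the zero-denominator convention, and the fact that the termwise comparison needs $f\ge 0$ (as the appendix assumes at its start).
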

\begin{proof}
    Let $L^\ast$ be the optimal solution of GLS on $G$. Using \cref{lemma:treecutsparsifierEq1_vertexImportance,lemma:treecutsparsifierEq2_vertexImportance}, and optimality of $L'$ for LLS, we get:
    \begin{align*}
        \alpha\Psi^f(L') &\ge \widehat{\Psi}^f_T(L')\\
        \widehat{\Psi}^f_T(L') &\ge \widehat{\Psi}^f_T(L^\ast) \ge \Psi^f_G(L^\ast)
    \end{align*}
\end{proof}

\paragraph{Reducing to Flow Problem.}

We define a new flow graph, with respect to the vertex importance function $f$:
\begin{definition}[Flow Graph with Vertex Importance]
    Given a tree $T=(V,E,w)$, a vertex importance function $f:V\to \mathbb{R}$, a set $L \subseteq \leaves{T}$, and a threshold $\tau\in \mathbb{R}$. We construct $T^f_{L,\tau}$:
    \begin{itemize}
        \item Vertex Set: $V\cup\{s,t\}$
        \item Copy: Every edge $e\in E$ is also present in $T^f_{L, \tau}$ with the same weight.
        \item Source: For every leaf $v\in \leaves{T}$, there is an edge $(s, v)$ with weight $\tau \cdot f(v)$ in $T^f_{L, \tau}$.
        \item Sink: For every chosen leaf $v\in L$, there is an edge $(v, t)$ with weight $\infty$ in $T^f_{L,\tau}$.
    \end{itemize}
\end{definition}

\begin{lemma}
    \label{lemma:lowmincutImpliesLowFVertexImportance}
    If $\lambda_{T^f_{L,\tau}}(\{s\},\{t\})<\tau \cdot \sum_{v\in \leaves{T}} f(v)$, then $\widehat{\Psi}^f_T(L) < \tau$.
\end{lemma}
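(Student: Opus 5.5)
The argument follows the proof of \cref{lemma:lowmincutImpliesLowF}, with the uniform source weight $\tau$ replaced by the leaf-dependent weight $\tau\cdot f(v)$. Write $F \coloneqq \sum_{v\in\leaves{T}} f(v)$.

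\emph{Step 1: pick a minimum cut.} Take a minimum $s$-$t$ cut in $T^f_{L,\tau}$, with source side $S'\cup\{s\}$ and sink side $(V\setminus S')\cup\{t\}$. Its value is finite, since it is strictly less than $\tau F$. Every $v\in L$ has an edge of weight $\infty$ to $t$, so $S'\cap L=\emptyset$.

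\emph{Step 2: split the cut value.} The cut value breaks into the tree edges crossing $(S', V\setminus S')$ plus the source edges from $s$ to leaves outside $S'$:
\[
\lambda_{T^f_{L,\tau}}(\{s\},\{t\}) = w_T(S',V\setminus S') + \tau\sum_{v\in\leaves{T}\setminus S'} f(v).
\]
Set $C \coloneqq S'\cap\leaves{T}$. The hypothesis then gives
\[
w_T(S',V\setminus S') < \tau\Bigl(F-\sum_{v\in\leaves{T}\setminus S'}f(v)\Bigr) = \tau\sum_{c\in C} f(c).
\]

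\emph{Step 3: compare with the objective.} The set $S'$ contains $C$ and avoids $\leaves{T}\setminus C$, so $\lambda_T(C,\leaves{T}\setminus C)\le w_T(S',V\setminus S') < \tau\sum_{c\in C}f(c)$. Also $C\subseteq \leaves{T}\setminus L$ by Step 1.

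\emph{Step 4: the degenerate denominator.} This is the only real obstacle. We need $\sum_{c\in C}f(c)>0$ before dividing. Since the left-hand side is nonnegative and the inequality is strict, $\tau\sum_{c\in C}f(c)>0$. In the natural regime $\tau>0$ and $f\ge 0$ (as assumed in the appendix's motivation, $f:V\to\mathbb{R}_{\ge0}$), this forces $\sum_{c\in C}f(c)>0$, and in particular $C\neq\emptyset$. I would state these sign assumptions explicitly at this point.

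\emph{Conclusion.} Dividing the inequality of Step 3 by $\sum_{c\in C}f(c)$ shows that $C$ witnesses a ratio below $\tau$. Hence $\widehat{\Psi}^f_T(L)<\tau$.
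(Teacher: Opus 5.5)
Your proposal is correct and follows the paper's proof step for step. A minimum cut $S'\cup\{s\}$ avoids $L$ because of the infinite sink edges, its value splits into $w_T(S',V\setminus S')$ plus the source edges $\tau f(v)$ to leaves outside $S'$, and $C=S'\cap\leaves{T}$ serves as the witness via $\lambda_T(C,\leaves{T}\setminus C)\le w_T(S',V\setminus S')$. Your Step~4 also supplies a justification the paper omits: the paper divides by $\sum_{c\in C}f(c)$ without noting that $0\le\lambda_T(C,\leaves{T}\setminus C)<\tau\sum_{c\in C}f(c)$ forces this sum to be positive whenever $\tau>0$.
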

\begin{proof}
    By the definition of mincut, there is some $S'$ such that:
    \begin{align*}
        \lambda_{T^f_{L,\tau}}(\{s\},\{t\}) = w_{T^f_{L,\tau}}(S' \cup \{s\}, (V \setminus S') \cup \{t\})
    \end{align*}
    We notice that $S',L$ must be disjoint because all vertices in $L$ have an edge of infinite capacity to $t$.
    We can write the size of this cut as the sum of two parts: The cut in the original tree $T$ plus the edges from the new sink $s$ to vertices in $(V \setminus S') \cup \{t\}$:
    \begin{align*}
        &w_{T^f_{L,\tau}}(S' \cup \{s\}, (V \setminus S') \cup \{t\}) \\=\, &w_T(S', V\setminus S') + w_{T^f_{L,\tau}}(\{s\}, (V \setminus S') \cup \{t\})
    \end{align*}
    The second part of the sum is just the edges from $\{s\}$ to leaves $v \in \leaves{T}\setminus S'$. The edge $(s,v)$ has weight $\tau\cdot f(v)$, so the sum can be written as:
    \begin{align*}
        &w_{T^f_{L,\tau}}(S' \cup \{s\}, (V \setminus S') \cup \{t\}) \\= \,&w_T(S', V\setminus S') + \tau \cdot \sum_{v\in\leaves{T}\setminus S'} f(v)
    \end{align*}

    Now we have $$\lambda_{T^f_{L,\tau}}(\{s\},\{t\}) = w_T(S', V\setminus S') + \tau \cdot \sum_{v\in\leaves{T}\setminus S'} f(v).$$ Plugging this into the LHS of the statement $\lambda_{T^f_{L,\tau}}(\{s\},\{t\})<\tau \cdot \sum_{v\in \leaves{T}} f(v)$ we get:
    \begin{align*}
        w_T(S', V\setminus S') + \tau \cdot \sum_{v\in\leaves{T}\setminus S'} f(v) &< \tau \cdot \sum_{v\in \leaves{T}} f(v)\\
        \implies\quad w_T(S', V\setminus S') &< \tau \cdot \sum_{v\in S' \cap \leaves{T}} f(v)
    \end{align*}
    Let us construct $S = S' \cap\leaves{T}$. We have 
    $$\lambda_T(S, \leaves{T}\setminus S) \le w_T(S', V\setminus S') < \tau \cdot \sum_{v\in S} f(v)$$
    Because $S,L$ are disjoint, this gives
    \begin{align*}
        \widehat{\Psi}^f_T(L) &= \min_{C \subseteq \leaves{T}\setminus L} \frac{\lambda_T(C,\leaves{T}\setminus C)}{\sum_{c\in C} f(c)} \\
        &< \frac{\tau \cdot \sum_{v\in S} f(v)}{\sum_{v\in S} f(v)} = \tau
    \end{align*}
\end{proof}

\begin{lemma}
    \label{lemma:lowFImpliesLowMincutVertexImportance}
    If $\widehat{\Psi}^f_T(L) < \tau$, then $\lambda_{T^f_{L,\tau}}(\{s\},\{t\})<\tau \cdot \sum_{v\in \leaves{T}} f(v)$
\end{lemma}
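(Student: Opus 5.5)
We mirror the argument of \cref{lemma:lowFImpliesLowMincut}, replacing the leaf count by the total importance.

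\textbf{Step 1: a witness cut.} By the definition of $\widehat{\Psi}^f_T$, the hypothesis $\widehat{\Psi}^f_T(L) < \tau$ gives a set $S \subseteq \leaves{T}\setminus L$ with $\sum_{v\in S} f(v) > 0$ and
$$\lambda_T(S,\leaves{T}\setminus S) < \tau \cdot \sum_{v\in S} f(v).$$
We only consider sets $C$ with positive importance in the minimum; sets with zero importance give ratio $+\infty$ or are undefined, and they are excluded by convention. By the definition of $\lambda_T$, there is an $S'$ with $S\subseteq S'\subseteq V\setminus(\leaves{T}\setminus S)$, so $S'\cap\leaves{T}=S$, and
$$w_T(S',V\setminus S') < \tau\sum_{v\in S} f(v).$$

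\textbf{Step 2: the $s$-$t$ cut.} Consider the $s$-$t$ cut $(S'\cup\{s\}, (V\setminus S')\cup\{t\})$ in $T^f_{L,\tau}$. Since $S'\cap L\subseteq S\cap L=\emptyset$, no $\infty$ sink edge crosses it. Its weight therefore splits into two parts:
\begin{itemize}
\item the tree edges, contributing $w_T(S',V\setminus S')$;
\item the source edges from $s$ to leaves outside $S'$, contributing $\tau\sum_{v\in\leaves{T}\setminus S'} f(v)=\tau\sum_{v\in\leaves{T}\setminus S} f(v)$.
\end{itemize}

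\textbf{Step 3: combine.} Adding the two parts and using the Step 1 bound,
$$w_{T^f_{L,\tau}}\bigl(S'\cup\{s\}, (V\setminus S')\cup\{t\}\bigr) < \tau\sum_{v\in S} f(v) + \tau\sum_{v\in\leaves{T}\setminus S} f(v) = \tau\sum_{v\in\leaves{T}} f(v).$$
Since $\lambda_{T^f_{L,\tau}}(\{s\},\{t\})$ is at most the weight of this cut, the claim follows.

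\textbf{Main subtlety.} The only delicate point is Step 1. We must ensure the minimizing $C$ has positive importance, so that multiplying through by $\sum_{v\in S} f(v)$ is valid, and we use $f\ge 0$. Everything else is a direct transcription of the unweighted proof, with $|S|$ replaced by $\sum_{v\in S} f(v)$ and $n$ replaced by $\sum_{v\in\leaves{T}} f(v)$.
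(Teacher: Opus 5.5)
Your proof is correct and follows the paper's argument essentially step for step. Like the paper, you take a witness $S\subseteq \leaves{T}\setminus L$ with $\lambda_T(S,\leaves{T}\setminus S)<\tau\sum_{v\in S}f(v)$, lift it to $S'$ with $S'\cap\leaves{T}=S$, and bound the $s$-$t$ cut $(S'\cup\{s\},(V\setminus S')\cup\{t\})$ by the tree edges plus the source edges to leaves outside $S'$. Your checks that no $\infty$ sink edge crosses (since $S'\cap L=\emptyset$) and that the witness has positive importance are details the paper leaves implicit, and you handle both correctly.
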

\begin{proof}
    By the definition of $\widehat{\Psi}^f$, there is some $S$ disjoint from $L$ such that:
    \begin{align*}
        &&\widehat{\Psi}^f_T(L) = \frac{\lambda_T(S, \leaves{T}\setminus S)}{\sum_{v\in S} f(v)}\\
        \implies && \lambda_T(S, \leaves{T}\setminus S) &< \tau \cdot \sum_{v\in S} f(v)
    \end{align*}
    Now, by the definition of mincut, there is some $S'$ with $S \subseteq S' \subseteq V\setminus (\leaves{T}\setminus S)$ such that:
    \begin{align*}
        w_T(S', V\setminus S') < \tau \cdot  \sum_{v\in S} f(v)
    \end{align*}
    We analyze $w_{T^f_{L,\tau}}(S'\cup \{s\}, (V \setminus S')\cup \{t\})$. As above, we can split this up:
    \begin{align*}
        &w_{T^f_{L,\tau}}(S' \cup \{s\}, (V \setminus S') \cup \{t\}) \\=\, &w_T(S, V\setminus S) + w_{T^f_{L,\tau}}(\{s\}, (V \setminus S') \cup \{t\})
    \end{align*}
    We can bound $w_{T_{L,\tau}}(\{s\}, (V \setminus S') \cup \{t\})$ just as above. It is the edges from $s$ to leaves $\leaves{T}\setminus S'$. That gives
    \begin{align*}
        &w_{T_{L,\tau}}(S' \cup \{s\}, (V \setminus S') \cup \{t\}) \\
        =\,& w_T(S', V\setminus S') + \tau \cdot \sum_{v\in\leaves{T}\setminus S'} f(v)\\
        <\,&\tau \cdot  \sum_{v\in S} f(v) + \tau \cdot \sum_{v\in\leaves{T}\setminus S'} f(v) \le \tau \cdot \sum_{v \in \leaves{T}} f(v)
    \end{align*}
    This directly implies $\lambda_{T^f_{L,\tau}}(\{s\},\{t\})<\tau \cdot \sum_{v\in \leaves{T}} f(v)$.
\end{proof}

\begin{corollary}
    $$\widehat{\Psi}^f_T(L) \ge \tau \iff \lambda_{T^f_{L,\tau}}(\{s\},\{t\}) = \tau \cdot \sum_{v\in \leaves{T}} f(v)$$
\end{corollary}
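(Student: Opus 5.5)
The corollary follows the same pattern as the unweighted one. Write $F \coloneqq \sum_{v\in \leaves{T}} f(v)$. The plan is to establish the trivial upper bound $\lambda_{T^f_{L,\tau}}(\{s\},\{t\}) \le \tau F$, and then get each direction of the equivalence by contraposition from \cref{lemma:lowmincutImpliesLowFVertexImportance} and \cref{lemma:lowFImpliesLowMincutVertexImportance}.

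\emph{Upper bound.} Consider the cut $(\{s\}, V \cup \{t\})$. The only edges leaving $s$ are the source edges $(s,v)$ for $v\in\leaves{T}$, each of weight $\tau f(v)$. So this cut has value exactly $\tau F$. It is a valid $s$-$t$ cut, hence $\lambda_{T^f_{L,\tau}}(\{s\},\{t\}) \le \tau F$. Here I use $\tau\ge 0$ and $f\ge 0$, as in the importance-function setting $f:V\to\mathbb{R}_{\ge 0}$, so that all capacities are nonnegative and the cut value is well defined.

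\emph{($\Rightarrow$).} Assume $\widehat{\Psi}^f_T(L)\ge\tau$. If the mincut were strictly below $\tau F$, then \cref{lemma:lowmincutImpliesLowFVertexImportance} would give $\widehat{\Psi}^f_T(L)<\tau$, a contradiction. Hence the mincut is at least $\tau F$, and with the upper bound it equals $\tau F$.

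\emph{($\Leftarrow$).} Assume the mincut equals $\tau F$. If $\widehat{\Psi}^f_T(L)<\tau$, then \cref{lemma:lowFImpliesLowMincutVertexImportance} would give a mincut strictly less than $\tau F$, a contradiction. Hence $\widehat{\Psi}^f_T(L)\ge\tau$.

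The only delicate point is the degenerate case of leaves with $f(v)=0$. In that case the ratio in $\widehat{\Psi}^f_T$ can have a zero denominator, so I would either adopt the convention $x/0=\infty$ or exclude sets $C$ with zero total importance. The two lemmas are used exactly as stated, so they already absorb this issue, and the corollary requires nothing further.
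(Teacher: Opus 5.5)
Your proposal is correct and matches the paper's proof. The paper likewise bounds the min cut by the total capacity $\tau \cdot \sum_{v\in\leaves{T}} f(v)$ of the edges at $s$, and then gets both directions directly from \cref{lemma:lowmincutImpliesLowFVertexImportance,lemma:lowFImpliesLowMincutVertexImportance}; your caveat about leaves with $f(v)=0$ is a reasonable extra remark that the paper leaves implicit.
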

\begin{proof}
    We have $\lambda_{T^f_{L,\tau}}(\{s\},\{t\}) \le \tau \cdot \sum_{v\in \leaves{T}} f(v)$ because the degree of $s$ is $\tau \cdot \sum_{v\in \leaves{T}} f(v)$. Then we just apply \cref{lemma:lowmincutImpliesLowFVertexImportance,lemma:lowFImpliesLowMincutVertexImportance}.
\end{proof}

\begin{definition}[Sink Selection Problem]
    Given a tree $T=(V,E,w)$ and a parameter $\tau$, find $L\subseteq \leaves{T}$ with minimal $|L|$, such that the $s$-$t$-maxflow in $T^f_{L,\tau}$ has value $\tau\cdot \sum_{v\in \leaves{T}} f(v)$. 
\end{definition}

The monotony from \cref{lemma:sinkselection_monotony} holds just as in \cref{section:reduction_flow}, so we can again find $L$ using binary search.

\paragraph{Solving the Flow Problem using Dynamic Programming.}
This part is pretty immediate. We can use the same DP-state as in \cref{section:DP}, just using the new flow graph $T^f_{L,\tau}$. The only thing that changes is the base cases for leaves. Instead of $-\tau$, we now use $-\tau \cdot f(v)$ at the leaf $v$. It's easy to see that this base case agrees with the new DP-state. The proof that the transitions are correct still works here.

\section{Further Experiments}
\label{appendix:further_experiments}

\begin{algorithm}[h]
  \caption{\textsc{Bisect} (\textsc{Fiedler})}
  \label{algorithm:bisect_spectral}
  \begin{algorithmic}
    \State {\bfseries Input:} Graph $G=(V,E,w)$ and its Laplacian $L$
    \State {\bfseries Output:} Bisection $(A,B)$ of $V$
    \State
    \State $\vec{v} \gets\lambda_2(L)$ \Comment{the second eigenvector of $L$}
    \State best $\gets (\mathrm{score}:\infty, A: \{\}, B:\{\})$
    \For{$t\in \mathrm{elements}(\vec{v})$} \Comment{iterate over all values in $\vec v$}
        \State $A \gets \{i \mid \vec v_i \le t\}$
        \State $B \gets \{i \mid \vec v_i > t\}$
        \State score $\gets \frac{w_G(A,B)}{\min(|A|,|B|)}$
        \If{score $<$ best.score}
            \State best $\gets (\mathrm{score}, A,B)$
        \EndIf
    \EndFor
    \State \Return $(\mathrm{best}.A, \mathrm{best}.B)$
  \end{algorithmic}
\end{algorithm}

\begin{algorithm}[h]
  \caption{\textsc{Bisect} (\textsc{Metis})}
  \label{algorithm:bisect_metis}
  \begin{algorithmic}
    \State {\bfseries Input:} Graph $G=(V,E,w)$ and a number of \#samples.
    \State {\bfseries Output:} Bisection $(A,B)$ of $V$
    \State
    \State best $\gets (\mathrm{score}:\infty, A: \{\}, B:\{\})$
    \For{$w \in \mathrm{geomspace}(1,n/2,\#samples)$}
        \State $(A,B) \gets \textsc{METIS}(G, \mathrm{tpwgts}=[w/n, 1-w/n])$
        \State score $\gets \frac{w_G(A,B)}{\min(|A|,|B|)}$
        \If{score $<$ best.score}
            \State best $\gets (\mathrm{score}, A,B)$
        \EndIf
    \EndFor
    \State \Return $(\mathrm{best}.A, \mathrm{best}.B)$
  \end{algorithmic}
\end{algorithm}

\begin{figure}[ht]
  \vskip 0.2in
  \begin{center}
    \centerline{\includegraphics[width=\columnwidth]{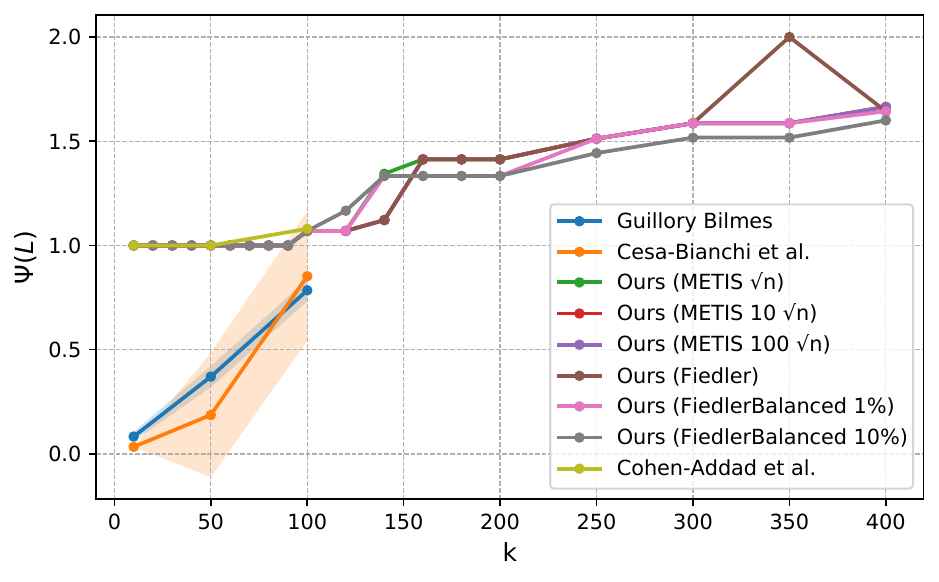}}
    \caption{
      Comparison of all algorithms on ego-facebook. Due to runtime constraints, we could not run all algorithms for every value of $k$.
    }
    \label{figure:ego-facebook}
  \end{center}
\end{figure}

\begin{figure}[ht]
  \vskip 0.2in
  \begin{center}
    \centerline{\includegraphics[width=\columnwidth]{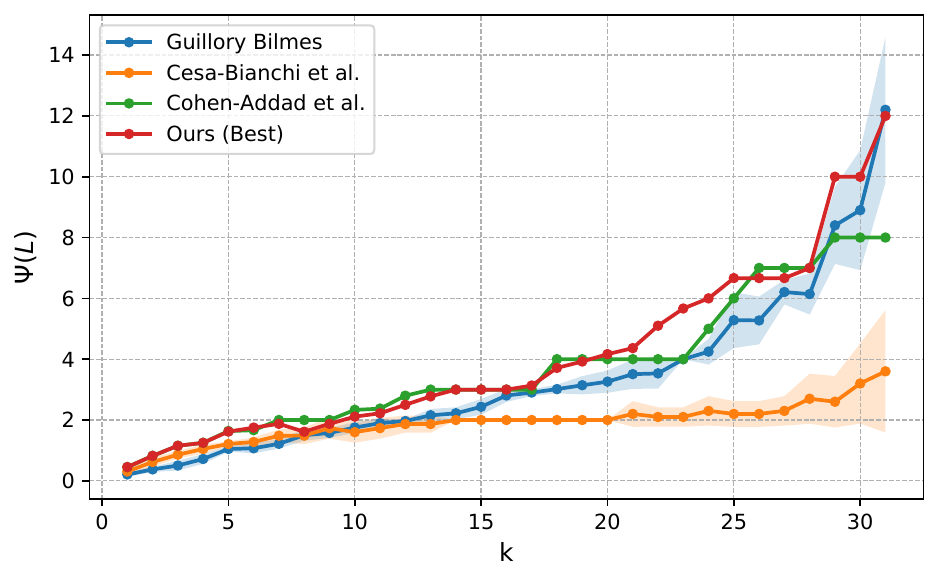}}
    \caption{
      Comparison of all algorithms on the Davis Southern Women Graph. We include this figure to compare all algorithms for many different values of $k$. For ``Ours (Best)'' we ran our algorithm with all bisect algorithms and plotted the best result.
    }
    \label{figure:davis-southern-women}
  \end{center}
\end{figure}


\end{document}